\def\ta{\mathtt{a}}
\DeclareMathOperator{\letters}{alph}
\DeclareMathOperator{\dist}{Dist}
\DeclarePairedDelimiter\floor{\lfloor}{\rfloor}
\newtheorem{problem}{Problem}
\title{Exploring Word-Representable Temporal Graphs} 
\author{Duncan Adamson}{School of Computer Science, University of St Andrews, United Kingdom}{duncan.adamson@st-andrews.ac.uk}{https://orcid.org/0000-0003-3343-2435}{}
\authorrunning{D. Adamson} 
\keywords{Temporal Graphs, Word-Representable Graphs} 
\begin{document}

\maketitle

\begin{abstract}
Word-representable graphs are a subset of graphs that may be represented by a word $w$ over an alphabet composed of the vertices in the graph. In such graphs, an edge exists if and only if the occurrences of the corresponding vertices alternate in the word $w$. We generalise this notion to temporal graphs, constructing timesteps by partitioning the word into factors (contiguous subwords) such that no factor contains more than one copy of any given symbol. With this definition, we study the problem of \emph{exploration}, asking for the fastest schedule such that a given agent may explore all $n$ vertices of the graph. We show that if the corresponding temporal graph is connected in every timestep, we may explore the graph in $2\delta n$ timesteps, where $\delta$ is the lowest degree of any vertex in the graph. In general, we show that, for any temporal graph represented by a word of length at least $n(2dn + d)$, with a connected underlying graph, the full graph can be explored in $2 d n$ timesteps, where $d$ is the diameter of the graph. We show this is asymptotically optimal by providing a class of graphs of diameter $d$ requiring $\Omega(d n)$ timesteps to explore, for any $d \in [1, n]$.
\end{abstract}

\section{Introduction}

Word-Representable graphs, first introduced by Kitaev and Pyatkin \cite{JALC-2008-045}, are a subset of graphs that may be represented in a concise manner via a word (or string) of symbols. In these graph, the vertices are represented by the alphabet, while edges are defined by the structure of the word. See \cite{kitaev2017comprehensive} and \cite{kitaev2015words} for an overview of the main results on word-representable graphs.

Much work on word-representable graphs has focused on determining which such graphs are representable \cite{collins2017new,feng2024generalized,gaetz2020enumeration,glen2018representation,halldorsson2011alternation,srinivasan2024minimum}. Such work has shown that a large number of graph classes are representable by word graphs, including three colourable graphs, circle graphs, and crown graphs. As such, the study of problems on these graphs are non-trivial.

In this work, we extend the notion of word-representable graphs to \emph{temporal graphs}. Temporal graphs are a generalisation of static graphs, replacing the single edge set with an ordered sequence of sets. Each edge set is used to define a \emph{timestep} (also known as a snapshot), during which the only active edges are those in the corresponding timestep. Temporal graphs have attracted a large amount of attention in recent years, with a particular focus on reachability \cite{DeligkasP22,EMMZ21,meeks2022reducing} and exploration \cite{adamson2022faster,arrighi2023kernelizing,dogeas2023exploiting,ERLEBACH2021,erlebach2019two,ErlebachKDefficent,erlebach2023parameterised,michail2016traveling}.

\paragraph*{Our Contributions}

We present a set of results on the problem of exploring word-representable temporal graphs. This combines research on word-representable graphs, temporal graphs, and graph exploration. We provide two primary results. First, we show that for any word-representable temporal graph where each timestep is a fully connected graph, we can construct an exploration schedule requiring at most $2\delta n$ timesteps, where $\delta$ is the minimum degree of any vertex in the graph and $n$ the number of vertices. Second, we show that for any word-representable temporal graph where the underlying graph is connected and the word representation requires at least $n(2 d n + d)$ symbols, we can construct an exploration schedule requiring $2 d n$ timesteps, where $d$ is the diameter of the graph, and $n$ the number of vertices. Finally, we show that, for any number of vertices $n$ and diameter $d$, there exists a word-representable temporal graph with $n$ vertices and diameter $d$ requiring $\Omega(d n)$ timesteps to explore.


\paragraph*{Related Work}

While there is a large body of work on both word-representable graphs and exploration of temporal graphs, we highlight a small number of papers in both fields of particular relevance.

Regarding word-representable graphs, we point first to the work of Kitaev and Pyatkin \cite{JALC-2008-045} who introduced word-representable graphs as a tool for semigroup theory. From the prespective of structural results on word-representable graphs, in \cite{COURCELLE2008416}, Courcelle shows that circle graphs are word-representable. Halldorson et al. \cite{HALLDORSSON2016164} strengthen this by showing that comparability graphs and all $3$-colourable graphs a word-representable, as well as presenting communication from Limouzy that cover graphs are word-representable. On the negative side, Halldorson et al. show that it is NP-hard to determine if a given triangle free graph is word-representable.

Regarding the exploration of temporal graphs, we first mention the landmark paper by Erlebach et al. \cite{ERLEBACH2021}, which established many of the key ideas behind temporal graph exploration. In particular, they show a general $O(n^2)$ upper bound on the length of the fastest exploration schedule for a single agent on an always connected temporal graph. This is complemented by schedules of length $O(n^{1.5} k^{1.5} \log n)$ for graphs with treewidth $k$, $O(n^{1.8} \log n)$ for planar graphs, $O(n)$ for cycles with one chord, and $O(n \log^3 n)$ for $2 \times n$ grids. Building on this, Erlebach and Spooner showed in \cite{ErlebachKDefficent} that an always connected temporal graph in which each timestep is at most $k$-edges inactive can be explored in $O(k n \log n)$ timesteps. Adamson et al. \cite{adamson2022faster} further showed that $k$-chord graphs can be explored in $O(k n)$ timesteps, that graphs with treewidth $k$ can be explored in $O(k n^{1.5} \log n)$ timesteps, and that planar graphs can be explored in $O(n^{1.75} \log n)$ timesteps.

\section{Preliminaries}

Let  $\mathcal{N} = \{1,2,\ldots\}$ denote the natural numbers and set $\mathcal{N}_0 = \mathcal{N} \cup \{0\}$. Let $[i,n]=\{i, i+1, \ldots, n\}$ for all $i, n \in\mathcal{N}_0$ with $i \leq n$.

\paragraph*{Graphs and Temporal Graphs}

We define a \emph{graph} $G$ by a tuple containing a set $V$ of vertices, and set $E \subseteq V \times V$ of edges, each a pair of vertices. In this paper, we consider only \emph{undirected graphs}, by which we assume that $(v_i, v_j) = (v_j, v_i)$. When writing an edge explicitly as $(v_i, v_j)$ we call $v_i$ the \emph{start point} and $v_j$ the \emph{end point} of the edge. Given an arbitrary edge $e$ and vertex $v$, we use $v \in e$ to indicate that $v$ is one of the vertices in $e$.

We call two vertices $v_i, v_j \in V$ \emph{neighbours} if $(v_i, v_j) \in E$. We denote by $N(v)$ the set of neighbours of $v$, formally, $N(v) = \{v' \in V \mid (v, v') \in E\}$. The \emph{degree} of a vertex $v \in V$, denoted $d(v)$, is equal to the number of neighbours of $v$, formally $d(v) = \vert N(v) \vert$. A \emph{walk} in a graph $G$ is an ordered sequence of connected edges, $e_1, e_2, \dots, e_k$ such that the end point of the $i^{th}$ edge is the start point of the $(i + 1)^{th}$ edge. The start point of a walk is the start point of the first edge of the walk, and the end point of a walk is the end point of the last edge of the walk. Two vertices, $v_i$ and $v_j$, are \emph{connected} if there exists some walk starting at $v_i$ and ending at $v_j$. Note that a walk may contain multiple copies of the same edge. We denote the set of all walks in a graph $G$ by $\mathcal{W}(G)$. The \emph{length} of a walk $P = e_1, e_2, \dots, e_k$, denoted $\vert P \vert$ is the number of edges in the walk. A walk \emph{visits} a vertex $v$ if there exists at least one edge $e$ such that $v \in e$ The \emph{distance} between two vertices $v_i, v_j \in V$, denoted $\dist(v_i, v_j)$ is the walk of minimum length with $v_i$ as the start point and $v_j$ as the end point. Formally, letting $\mathcal{W}(G, v_i, v_j) = \{(e_1, e_2, \dots, e_{m}) \in E^* \mid  e_1 = (v_i, u_i), e_{m} = (u_j, v_j), u_i, u_j \in V\}$ be the set of walks in the graph $G$ starting at $v_i$ and ending at $v_j$, $\dist(v_i, v_j) = \min_ {p \in \mathcal{W}(G, v_i, v_j)} \vert p \vert$. We define the \emph{diameter} of a graph as $\max_{v_i, v_j \in V} \dist(v_i, v_j)$.

A \emph{temporal graph} $\mathcal{G}$ is a generalisation of a graph, defined over one set of vertices $V$, and $T$ sets of edges, $E_1, E_2, \dots, E_T$. We refer to the graph formed from the vertex set $V$ and the $t^{th}$ edge set, $E_t$, as the \emph{$t^{th}$ timestep}, denoted $G_t = (V, E_t)$. The \emph{lifetime} of a temporal graph $\mathcal{G} = (V, E_1, E_2, \dots, E_T)$ is equal to the number $T$ of edge sets in $\mathcal{G}$. The \emph{underlying graph} of a temporal graph $\mathcal{G} = (V, E_1, E_2, \dots, E_T)$, denoted $U(\mathcal{G})$, is the graph formed over the vertex set $V$ and the union of all edge sets, giving $U(\mathcal{G}) = (V, \bigcup_{t \in [1, T]} E_t)$.
A \emph{temporal walk} is an extension of a walk, with each step being an edge-index pair, $(e_1, t_1) (e_2, t_2) \dots (e_k, t_k)$ such that, for every $i \in [1, k - 1]$:
\begin{itemize}
    \item the edge $e_i$ is active in the timestep $E_{t_i}$,
    \item the end point of $e_i$ is the start point of $e_{i + 1}$, and,
    \item $t_{i + 1} > t_{i}$.
\end{itemize}
Additionally, $e_k$ must be active in timestep $t_k$.
Note that $t_{i + 1}$ may be greater than $t_{i} + 1$. In this case, assuming $e_i = (v_j, v_i)$, we say that an agent traversing this walk \emph{waits} at $v_i$ for $t_{i + 1} - t_i - 1$ timesteps. A temporal walk $P = (e_1, t_1), (e_2, t_2), \dots, (e_k, t_k)$ visits a vertex $v$ iff there is some edge $e_i$ in the walk such that $v \in e_i$. The \emph{length} of a temporal walk containing $k$ edges is equal to $t_k$, i.e. the timestep at which some agent following the walk would arrive at the end vertex. We say a temporal walk \emph{requires} $t$ timesteps if the walk has length $t$.

\begin{problem}[The Temporal Graph Exploration Problem]
    \label{prob:graph_exploration}
    Given a temporal graph $\mathcal{G} = (V, E_1, E_2, \dots, E_T)$, and start vertex $v \in V$ does there exist a temporal walk starting at $v$ and visiting every vertex in $V$?
\end{problem}

\paragraph*{Words}

An \emph{alphabet} $\Sigma=\{1,2,\ldots,\sigma\}$ is a finite set of symbols. A \emph{word} (also known as a \emph{string}) $w$ is a finite sequence of symbols from a given alphabet. We assume, for the remainder of this paper, that our alphabet $\Sigma$ is defined over some set of integers $1, 2, \dots, \sigma$.
The length of a word $w$, denoted $\vert w \vert$ is the number of letters in the word. For $i \in
[1, \vert w \vert]$ let $w[i]$ denote the $i^{th}$ letter of $w$. The set of all finite words over the alphabet $\Sigma$ is denoted by $\Sigma^{\ast}$. Given $n \in \mathcal{N}_0$, let $\Sigma^n$ denote all words in $\Sigma^{\ast}$ exactly of length $n$. Given two words $w, v$, we denote by $w v$ the word formed by the concatenation of $w$ and $v$, i.e. the word such that $w v[i] = w[i]$, if $i \in [1, \vert w \vert]$ or $v[i - \vert w \vert]$ if $i \in [\vert w \vert + 1, \vert w\vert + \vert v \vert ]$.  Given a word $w \in \Sigma^*$ and natural $k \in \mathbb{N}_0$, let $w^k$ denote the word formed by $k$ copies of $w$, satisfying $\vert w^k \vert = k \vert w \vert$ and $w^k [i] = w[i \bmod \vert w \vert]$, $\forall i \in [1, k \vert w \vert]$. Let $\letters(w) = \{\ta \in \Sigma \mid \exists i \in [\vert w \vert ]$ s.t. $w[i] = \ta \}$ be the alphabet of $w$.

Given a pair of words $u, w \in \Sigma^*$ where $\vert u \vert \leq \vert w \vert$, $u$ is a \emph{subsequence} of $w$ if there exists some set of indices $i_1, i_2, \dots, i_{\vert u \vert}$ such that $1 \leq i_1 < i_2 \dots < i_{\vert u \vert} \leq \vert w \vert$ such that $u = w[i_1] w[i_2] \dots w[i_{\vert u \vert}]$. Given a set of symbols $\mathcal{S} \subseteq \Sigma$ and word $w \in \Sigma^*$, we define the word $\pi_{\mathcal{S}}(w)$ recursively as:
\begin{itemize}
    \item the empty word $\varepsilon$ if $\vert w \vert = 0$, or,
    \item $\pi_{\mathcal{S}}(w[2, \vert w \vert])$ if $w[1] \notin \mathcal{S}$, or,
    \item $w[1] \pi_{\mathcal{S}}(w[2, \vert w \vert])$ if $w[1] \in \mathcal{S}$.
\end{itemize}
Note that this matches the notation given in \cite{Fleischmann2024}, though defined in an alternative manner. Informally, $\pi_{\mathcal{S}}(w)$ is the longest subsequence of $w$ composed of only those symbols in $\mathcal{S}$. See Figure \ref{fig:example_of_pi_function} for an example of this function.

\begin{figure}
    \centering
    \begin{tabular}{c|c}
        $\mathcal{S}$ & $\pi_{\mathcal{S}}(w)$ \\
        \hline
        $\mathcal{S} = \{a\}$ & $a a a$\\
        $\mathcal{S} = \{a, b\}$ & $ababab$\\
        $\mathcal{S} = \{a, c\}$ & $acaca$\\
        $\mathcal{S} = \{b, c\}$ & $b c b b$
    \end{tabular}
    \caption{Example of $\pi_{\mathcal{S}}$ on the word $w = a c b a c b a b$}
    \label{fig:example_of_pi_function}
\end{figure}

\paragraph*{Word-Representable (Temporal) Graphs}

Given a word $w \in \Sigma^*$, the graph represented by $w$, $G(w)$ is constructed as follows. Let $V = (v_1, v_2, \dots, v_n)$ be a set of $n$ vertices, each labelled uniquely by some symbol from $\Sigma = 1, 2, \dots, n$. Informally, $G(w)$ contains the edge $(v_x, v_y)$ iff the symbols $x$ and $y$ \emph{alternate} in $w$. A pair of symbols, $x, y \in \Sigma$ alternate in $w$ if $\pi_{\{x, y\}}(w) \in \{ (x y)^k, (x y)^k x, (y x)^k, (y x)^k y \mid k \in \mathbb{N}_0\}$.
Thus, we define the edge set $E$ as the set $\{(v_x, v_y) \mid \pi_{\{x, y\}}(w) \in \{ (x y)^k, (x y)^k x, (y x)^k, (y x)^k y \mid k \in \mathbb{N}_0\}\}$. A graph $G$ is \emph{word-representable} if there exists some word $w$ such that $G(w) = G$.

Given a word $w \in \Sigma^*$, the temporal graph represented by $w$, $TG(w)$, is constructed using $G(w)$ as a basis.
To determine the timesteps, we introduce the set of indices $S_1, S_2, \dots, S_T \in [1, \vert w \vert]$ as the set of \emph{start points} for each timestep. The value of each is computed recursively, with $S_1 = 1$, and $S_i$ the index such that $w[S_i] \in \letters(w[S_{i - 1}, S_i - 1])$ and, $\forall S_i' \in [S_{i - 1}, S_i - 1], w[S_i'] \notin \letters(w[S_{i - 1} + 1, S_{i' - 1}])$. Equivalently, $S_i$ is the index such that $\vert w[S_{i - 1}, S_i] \vert = \vert \letters(w[S_{i - 1}, S_i]) \vert + 1$.


With this set of indices, the timestep $t$ is defined with regard to the symbols appearing in the factor $w[S_t, S_{t + 1} - 1]$ (or, $w[S_T, \vert w \vert]$ for the final timestep). Formally, given some edge $(v_x, v_y)$ in the edge set $E$ of $G(w)$, the edge $(v_x, v_y) \in E_t$ iff either $x \in \letters(w[S_{t}, S_{t + 1} - 1])$ (resp., $x \in w[S_T, \vert w \vert]$) or $y \in \letters(w[S_{t}, S_{t + 1} - 1])$ (resp., $y \in w[S_T, \vert w \vert]$).
We call any factor of $w$ of the form $w[S_i, S_{i + 1} - 1]$ a \emph{timestep factor}.
We present an example in Figure \ref{fig:temporal_graph_example}.

\begin{figure}
    \centering
    \includegraphics[width=0.95\linewidth]{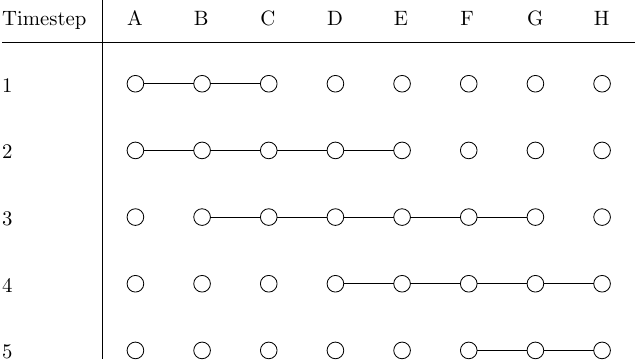}
    \caption{The temporal graph represented by $abacbdcedfegfhg$. Note that the start points are $S_1 = 1, S_2 = 3, S_4 = 7, S_5 = 11$.}
    \label{fig:temporal_graph_example}
\end{figure}

\section{Always Connected Graphs}
\label{sec:always_connected}

In this section, we provide results on exploring \emph{always connected graphs}. A temporal graph $\mathcal{G}$ is always connected if there exists, at every timestep, exactly one connected component containing all vertices in $\mathcal{G}$.

First, we provide some technical lemmas regarding the structure of any word $w \in \Sigma^*$ representing such a graph.

\begin{lemma}
    \label{lem:gap_between_letters}
    Let $\mathcal{G} = (V, E_1, E_2, \dots, E_{T})$ be a word-representable temporal graph represented by the word $w$, such that $\mathcal{G}$ is connected in every timestep. Then, given any vertex $v_x \in V$, $x \in \letters(w[S_t, S_{t + d(v_x) + 1} - 1])$, $\forall t \in [1, T - d(x) - 2]$, and $x \in \letters(w[S_{T - d(v_x) - 1}, \vert w \vert]$ where $S_1, S_2, \dots, S_T$ are the set of start points of $w$.
\end{lemma}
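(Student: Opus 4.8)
The plan is to argue by contradiction via a pigeonhole principle on the neighbours of $v_x$. The central observation is that whenever the symbol $x$ is absent from a timestep factor, connectivity of that timestep must be supplied to $v_x$ through one of its neighbours, and the alternation condition underlying $G(w)$ severely limits how many times a given neighbour can be used across an $x$-free stretch of the word.

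First I would set up the contradiction: suppose $x \notin \letters(w[S_t, S_{t + d(v_x) + 1} - 1])$ for some admissible $t \in [1, T - d(x) - 2]$; equivalently, $x$ is absent from each of the $d(v_x) + 1$ consecutive timestep factors $w[S_{t'}, S_{t' + 1} - 1]$ with $t' \in [t, t + d(v_x)]$. Since $\mathcal{G}$ is connected in every timestep and (assuming $n \geq 2$) no vertex may be isolated, $v_x$ must have at least one active edge in each graph $G_{t'}$. By the definition of $E_{t'}$, an edge $(v_x, v_y)$ is active at timestep $t'$ precisely when $x \in \letters(w[S_{t'}, S_{t'+1}-1])$ or $y \in \letters(w[S_{t'}, S_{t'+1}-1])$; as $x$ is absent, this active edge must come from a neighbour $v_y$ of $v_x$ with $y$ appearing in that factor. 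Hence each of the $d(v_x)+1$ factors contains at least one symbol corresponding to a neighbour of $v_x$.

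The key step is to bound the total number of neighbour-occurrences available in this stretch. These factors form a single contiguous block $w[S_t, S_{t+d(v_x)+1}-1]$ of $w$ in which $x$ does not occur. For any neighbour $v_y$ of $v_x$, the pair $x,y$ alternates in $w$ by the definition of $G(w)$; therefore $\pi_{\{x,y\}}(w)$ contains no two consecutive copies of $y$, and since $x$ is absent from this block, any two occurrences of $y$ inside it would be consecutive in $\pi_{\{x,y\}}(w)$. Consequently $y$ can occur at most once within the block. I would then count: the block meets all $d(v_x)+1$ factors, each factor requires a neighbour-occurrence, and a single neighbour's lone occurrence can serve at most one factor, so covering the $d(v_x)+1$ factors would require at least $d(v_x)+1$ distinct neighbours of $v_x$, contradicting $|N(v_x)| = d(v_x)$. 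This contradiction establishes that $x$ appears in every window of $d(v_x)+1$ consecutive factors, which is exactly the interior statement.

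Finally I would dispose of the boundary by noting that the trailing window $w[S_{T-d(v_x)-1}, |w|]$ spans $d(v_x)+2$ consecutive factors (recalling that the final factor is the specially defined $w[S_T,|w|]$), so the same argument applied to any $d(v_x)+1$ of them forces $x$ to appear there as well; the restriction $t \le T - d(v_x) - 2$ in the interior claim is precisely what keeps the index $S_{t+d(v_x)+1}$ away from this distinguished last factor. The main obstacle is the key counting step: one must argue cleanly that the alternation property limits each neighbour to a single occurrence within a maximal $x$-free block, and that connectivity in an $x$-free factor cannot be salvaged by anything other than a neighbour's occurrence. Some care is also needed with the off-by-one in the factor indexing induced by the distinguished final factor.
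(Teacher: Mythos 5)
Your proposal is correct and follows essentially the same route as the paper's own proof: connectivity forces each $x$-free timestep factor to contain a neighbour symbol of $v_x$, the alternation property limits each neighbour to at most one occurrence in any $x$-free block, and counting against $d(v_x)$ yields the bound. You merely phrase as an explicit pigeonhole contradiction what the paper states directly ("there can be at most $d(v_x)$ contiguous timestep-factors that do not include $x$"), and your write-up is in fact more careful about the key counting step and the boundary factor than the paper's terse argument.
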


\begin{proof}
    Note that $\mathcal{G}$ is connected in every timestep if and only if there exists, for every $v_x \in V$ and $t \in [1, T]$, some edge $(v_x, v_y) \in E_t$. Further, as an edge $e$ is represented in $w$ by an subsequence of alternating symbols in $w$, the edge $e = (v_x, v_y)$ can appear in timestep $t$ if and only if $\{x, y \} \cap \letters(w[S_t, S_{t + 1} - 1]) \neq \emptyset$. As $v_x$ belongs to $d(v_x)$ edges, 
    By definition, there must exist in the factor $w[S_t, S_{t + 1} - 1]$ either the symbol $x$, or some $y$ such that $(v_x ,v_y) \in E$. Note that the edge $(v_x, v_y)$ can be in the graph $\mathcal{G}$ iff $\pi_{\{x, y\}}(w) \in \{(x y)^k, (x y)^k x, (y x)^k, (y x)^k y \mid k \in \mathbb{N}\}$. Therefore, there can be at most $d(v_x)$ contiguous timestep-factors that do not include $x$, giving the statement.
\end{proof}

\begin{corollary}
    \label{col:gap_between_edges}
    Let $\mathcal{G} = (V, E_1, E_2, \dots, E_{T})$ be a word-representable temporal graph represented by the word $w$, such that $\mathcal{G}$ is connected in every timestep. Then the edge $(v_x, v_y) \in E_{t} \cup E_{t + 1} \cup \dots \cup E_{t + \min(d(v_x), d(v_y))}$, $\forall t \in [1, T - \min(d(v_x), d(v_y))]$.
\end{corollary}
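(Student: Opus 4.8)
The plan is to derive the corollary directly from Lemma~\ref{lem:gap_between_letters} by applying it to whichever endpoint of the edge has the smaller degree. First I would fix an edge $(v_x, v_y)$ and assume without loss of generality that $d(v_x) \le d(v_y)$, so that $\min(d(v_x), d(v_y)) = d(v_x)$; the symmetric case follows simply by exchanging the roles of $x$ and $y$.

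Next I would recall the activation criterion from the definition of $TG(w)$: the edge $(v_x, v_y)$ lies in $E_{t'}$ precisely when $x \in \letters(w[S_{t'}, S_{t' + 1} - 1])$ or $y \in \letters(w[S_{t'}, S_{t' + 1} - 1])$. Hence, to show the edge appears somewhere in $E_{t} \cup E_{t + 1} \cup \dots \cup E_{t + d(v_x)}$, it suffices to exhibit a single index $t' \in [t, t + d(v_x)]$ whose timestep factor contains the symbol $x$.

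The key step is then to invoke Lemma~\ref{lem:gap_between_letters} for the vertex $v_x$. The lemma guarantees that $x \in \letters(w[S_t, S_{t + d(v_x) + 1} - 1])$, and the factor $w[S_t, S_{t + d(v_x) + 1} - 1]$ is exactly the concatenation of the timestep factors indexed $t, t + 1, \dots, t + d(v_x)$. Consequently $x$ must occur inside at least one of these $d(v_x) + 1$ timestep factors, say the one indexed $t'$, which by the activation criterion forces $(v_x, v_y) \in E_{t'}$, and therefore into the stated union.

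The main obstacle I anticipate is purely bookkeeping at the right-hand boundary: Lemma~\ref{lem:gap_between_letters} states the uniform window claim only for $t \in [1, T - d(v_x) - 2]$, handling the tail separately through the clause $x \in \letters(w[S_{T - d(v_x) - 1}, \vert w \vert])$. I would therefore treat the windows that run up against the final timestep $T$ as a separate case, using this tail clause to again locate an occurrence of $x$ among the last timestep factors, and then confirm that the off-by-one alignment between the $d(v_x) + 1$ timestep factors and the closed index range $[t, t + \min(d(v_x), d(v_y))]$ remains consistent in this regime as well.
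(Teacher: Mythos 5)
Your proposal is correct and takes exactly the route the paper intends: the paper states this as an immediate consequence of Lemma~\ref{lem:gap_between_letters} with no separate proof, and your argument---applying that lemma to the endpoint of smaller degree and then using the edge-activation criterion to locate an occurrence of the symbol within the $\min(d(v_x), d(v_y)) + 1$ consecutive timestep factors---is precisely that derivation spelled out. Your separate handling of the tail window near timestep $T$ addresses the same off-by-one bookkeeping that the paper leaves implicit.
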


We now strengthen Lemma \ref{lem:gap_between_letters} and Corollary \ref{col:gap_between_edges} to prove that each edge $e$ in any word-representable always connected temporal graph appears at least once in every set of $\delta + 1$ contiguous timesteps. The main idea is to propagate the constraints given implicitly in Corollary \ref{col:gap_between_edges} across the graph. At a high level, we have that not only must some symbol $x$ satisfy $x \in \letters(w[S_t, S_{t + d(v_x) + 1} - 1]$, as given in Lemma \ref{lem:gap_between_letters}, but also $x \in \letters(w[S_t, S_{t + \min(d(v_x), d(v_y)) + 1} - 1]$ as indicated by Corollary \ref{col:gap_between_edges} and formally stated in Lemma \ref{lem:min_gaps}. More generally, however, $x$ must appear at least once between each occurrence of $y$, for any $y$ such that $v_y \in N(v_x)$. Thus, any bound on the number of occurrences of $y$ can be propagated through the graph to all other symbols. By extension, we get a bound on the number of timesteps in which an edge may be absent. Lemma \ref{lem:min_gaps} formalises this idea.

\begin{lemma}
    \label{lem:min_gaps}
    Let $\mathcal{G} = (V, E_1, E_2, \dots, E_{T})$ be a word-representable temporal graph represented by the word $w$, such that $\mathcal{G}$ is connected in every timestep. Then the edge $(v_x, v_y) \in E_{t} \cup E_{t + 1} \cup \dots \cup E_{t + \delta}$, $\forall t \in [1, T - \delta]$, where $\delta = \min_{v \in V} d(v)$.
\end{lemma}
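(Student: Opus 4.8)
The plan is to argue by contradiction: suppose the edge $(v_x, v_y)$ is inactive for $\delta + 1$ consecutive timesteps, say throughout timesteps $t, t+1, \dots, t+\delta$. By the definition of $TG(w)$ this means that neither $x$ nor $y$ occurs in any of the corresponding timestep-factors; write $B$ for the factor of $w$ spanning these $\delta+1$ timesteps, so that $x, y \notin \letters(B)$. The goal is to show that such a $B$ cannot exist.

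Two observations drive the argument. First, \emph{alternation gives locality}: if a symbol $a$ is absent from a block of consecutive factors, then every neighbour of $v_a$ occurs at most once in that block, since two occurrences of a neighbour would have to sandwich an occurrence of $a$ (as $a$ alternates with each of its neighbours), contradicting the absence of $a$. Second, \emph{connectivity gives coverage}: since $\mathcal{G}$ is connected in every timestep, no vertex is isolated, so in each of the $\delta+1$ factors making up $B$ the vertex $v_x$ must be an endpoint of some active edge; as $x$ itself is absent, that edge is $(v_x, v_z)$ for some neighbour $z \neq y$ of $x$ occurring in that factor. Combining the two, the neighbours witnessing connectivity in the $\delta+1$ distinct factors of $B$ are pairwise distinct (each occurs at most once across all of $B$), so at least $\delta+1$ distinct neighbours of $v_x$, none equal to $y$, occur in $B$, giving $d(v_x) \geq \delta + 1$; the symmetric argument gives $d(v_y) \geq \delta + 1$.

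This already recovers the bound of Corollary \ref{col:gap_between_edges}, but to reach the \emph{global} minimum $\delta$ I would propagate a constraint from a vertex of minimum degree. Let $v_m$ be a vertex with $d(v_m) = \delta$ (if $m \in \{x, y\}$ the claim is immediate from Lemma \ref{lem:gap_between_letters}, since then $x$ or $y$ is forced to occur in every window of $\delta+1$ factors, so assume $m \notin \{x,y\}$). By Lemma \ref{lem:gap_between_letters}, $m$ occurs at least once in every window of $\delta + 1$ consecutive factors, hence $m$ occurs inside $B$. The idea is then to push this occurrence outward through the connected underlying graph: using that $x$ must occur between any two consecutive occurrences of a neighbour of $v_x$, and chaining this relation along a path from $v_m$ to $\{v_x, v_y\}$, one forces an occurrence of $x$ or $y$ to fall inside $B$, contradicting $x, y \notin \letters(B)$.

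The main obstacle is exactly this propagation step. A naive pairwise application of the alternation relation \emph{loosens} rather than tightens the gap: knowing a neighbour occurs in every window of size $c$ only forces the symbol itself to occur in every window of size roughly $2c$, so a symbol-by-symbol chase will not descend to $\delta$. The delicate point is therefore to combine the forced occurrence of the minimum-degree symbol $m$ in $B$ with the \emph{global} connectivity of each factor, rather than the mere non-isolation of a single vertex. I expect the clean way to do this is a cut argument: in a factor missing $x$, the present neighbours of $v_x$ must reconnect the parts of the graph that $v_x$ would otherwise separate, which forces enough of them to appear that, over the $\delta+1$ factors of $B$, alternation is violated at $v_m$. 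Making this cut and counting argument precise against the fixed greedy factorisation, whose boundaries cannot be chosen freely, is where the real work lies.
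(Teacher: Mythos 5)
Your argument, as written, does not prove the lemma, and you effectively concede this yourself. The first half is sound: your two observations (alternation forces every neighbour of an absent symbol to occur at most once in the block $B$, and per-timestep connectivity forces a fresh neighbour of $v_x$, distinct from $y$, in each of the $\delta+1$ factors of $B$) are correct, and together they give $d(v_x) \geq \delta + 2$ and $d(v_y) \geq \delta + 2$ --- a clean derivation of (slightly more than) Corollary \ref{col:gap_between_edges}, and your handling of the case $m \in \{x,y\}$ via Lemma \ref{lem:gap_between_letters} is also fine. But the lemma asserts the bound with the \emph{global} minimum degree $\delta$, and the propagation from a minimum-degree vertex $v_m$ to an arbitrary edge $(v_x, v_y)$ is exactly what you leave open: the proposed ``cut argument'' is a direction, not a proof. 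Your own diagnosis of why the obvious route fails is accurate --- if a symbol occurs in every window of $c$ consecutive timestep-factors, alternation only forces each of its neighbours to occur in every window of roughly $2c$ such factors, since an $x$-free run can span two consecutive $y$-gaps --- so the constraint degrades rather than transfers along a path, and nothing you write closes this hole.

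What you should know is that the step you could not make rigorous is precisely the step the paper's own proof asserts without justification. The paper propagates the constraint edge-by-edge by claiming that if $y \in \letters(w[S_t, S_{t + d(v_z) + 1} - 1])$ for all $t$, then also $x \in \letters(w[S_t, S_{t + d(v_z) + 1} - 1])$ for all $t$, for $v_x$ a neighbour of $v_y$; this is exactly the ``naive pairwise application'' that your doubling observation refutes, and no argument is given for why the window size is preserved. (The paper's proof also invokes Corollary \ref{col:gap_between_edges}, which only guarantees that $x$ \emph{or} $y$ occurs in a given window, to conclude that a specific one of them does, and it shrinks windows from $\min(d(v_y),d(v_z))+1$ to $\min(d(v_x),d(v_y),d(v_z))+1$ factors, which is not a valid inference for ``occurs in every window'' statements.) So the honest summary is: your proposal follows the same high-level plan as the paper --- propagate the minimum-degree constraint through the connected graph via alternation --- is more careful about what alternation actually yields, but is incomplete; a full proof of Lemma \ref{lem:min_gaps} would require either carrying out your cut-and-counting idea against the greedy factorisation, or some other mechanism that neither your write-up nor the paper currently supplies.
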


\begin{proof}
    Consider some vertex $v_x$. Observe that, following Corollary \ref{col:gap_between_edges}, $(v_x, v_y)$ must be active at least once every $\min(d(v_x), d(v_y))$ timesteps. Therefore, $x, y \in \letters(w[S_t$, \\$S_{t + \min(d(v_x), d(v_y)) + 1} - 1])$. Now, consider some $z$ such that $v_z \in N(v_y)$ From Corollary \ref{col:gap_between_edges} we have that $y \in \letters(w[S_t$,$S_{t + \min(d(v_y), d(v_z)) + 1} - 1])$. Thus, $y \in \letters(w[S_t$,\\ $S_{t + \min(d(v_x), d(v_y), d(v_z)) + 1} - 1])$. Now, assume that $d(v_z) < \min(d(v_x), d(v_y))$. Therefore, as $y \in \letters(w[S_t, S_{t + d(v_z) + 1} - 1]), \forall t \in [1, T - d(v_z)]$, we have that $x \in \letters(w[S_t, S_{t + d(v_z) + 1} - 1]), \forall t \in [1, T - d(v_z)]$. Extending this argument across the graph shows that, $\forall x \in \Sigma$, $x \in \letters(w[S_t, S_{t + \delta + 1} - 1])$, and thus, for any $v_y \in N(v_x)$, $(v_x, v_y) \in E_{t} \cup E_{t + 1} \cup \dots \cup E_{t + \delta}$, $\forall t \in [1, T - \delta]$.
\end{proof}


We now present our main theorem for this section, showing that any word-representable always connected temporal graph may be explored in $2\delta n$ timesteps. The high-level idea behind this proof is to first construct a spanning tree on the underlying graph, then use a walk $P$ exploring this tree to derive a temporal walk exploring the temporal graph. We do so by following the same set of edges as $P$, waiting at each vertex at most $\delta$ timesteps for the next edge to become available. As $P$ can contain at most $2n$ edges, our temporal walk requires at most $2 \delta n$ timesteps, giving the algorithmic result.

\begin{theorem}
    \label{thm:time_complexity_of_exploring_always_connected_graphs}
    Let  $\mathcal{G} = (V, E_1, E_2, \dots, E_{T})$ be a word-representable temporal graph represented by the word $w$, such that $G$ is connected in every timestep. Then, $\mathcal{G}$ can be explored in $2\delta \vert V \vert$ timesteps starting at any vertex $v_x$, where $\delta = \min_{v \in V} d(v)$ is the minimum degree of any vertex in $\mathcal{G}$.
\end{theorem}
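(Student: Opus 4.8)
The plan is to reduce the temporal problem to a static one and then ``lift'' the static solution using the gap guarantee of Lemma~\ref{lem:min_gaps}. Because $\mathcal{G}$ is connected in every timestep, its underlying graph $U(\mathcal{G})$ is connected, so I would first fix a spanning tree $\mathcal{T}$ of $U(\mathcal{G})$ rooted at the start vertex $v_x$. Performing a depth-first traversal of $\mathcal{T}$ produces a closed walk $P = e_1 e_2 \cdots e_k$ that begins at $v_x$, visits every vertex of $V$, and crosses each of the $n-1$ tree edges exactly twice, so $k \le 2(n-1) < 2n$. Every edge appearing in $P$ lies in $U(\mathcal{G})$ and is therefore active in at least one timestep, which is precisely what lets us realise $P$ temporally.

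I would then turn $P$ into a temporal walk by a greedy, time-forward simulation. Place the agent at $v_x$ at time $0$, and inductively suppose it has traversed $e_1, \dots, e_{i-1}$, arriving at the start point of $e_i$ at time $t_{i-1}$ (with $t_0 = 0$). Writing $e_i = (v_a, v_b)$, Lemma~\ref{lem:min_gaps} applied with window start $t_{i-1}+1$ guarantees that $e_i$ is active in at least one of $E_{t_{i-1}+1}, \dots, E_{t_{i-1}+\delta+1}$; letting $t_i$ be the earliest such index, the agent waits at most $\delta$ timesteps at $v_a$ and then crosses $e_i$ at time $t_i$. The resulting sequence $(e_1,t_1),\dots,(e_k,t_k)$ satisfies the three conditions of a temporal walk from the preliminaries --- strictly increasing timestamps, matching endpoints inherited from $P$, and each edge active when used --- and it visits precisely the vertices visited by $P$, namely all of $V$.

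For the running time I would bound $t_k$ by summing the per-edge increments. Each edge contributes at most $\delta$ waiting timesteps, and there are at most $2n$ edges, so the agent finishes within $2\delta n = 2\delta \vert V \vert$ timesteps, as claimed. The step I expect to be the main obstacle is the boundary bookkeeping: Lemma~\ref{lem:min_gaps} only supplies its gap guarantee for window starts in $[1, T-\delta]$, so I must verify that the simulation never requests an edge inside the final $\delta$ timesteps of the lifetime. This reduces to the (implicit) hypothesis that the lifetime $T$ is at least the exploration time $2\delta n$; under that assumption every invocation of the lemma is legitimate and the induction closes. A lesser point to handle carefully is the exact accounting of the single traversal step per edge against the waiting time, which only affects lower-order terms and leaves the bound $O(\delta n)$ --- matching the stated $2\delta n$ --- intact.
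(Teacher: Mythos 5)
Your proposal follows essentially the same route as the paper: take a spanning tree of the underlying graph, extract a walk of at most $2n$ edges visiting every vertex, and convert it into a temporal walk by invoking Lemma~\ref{lem:min_gaps} to bound the wait before each edge traversal by $\delta$ timesteps, yielding the $2\delta\vert V\vert$ bound. In fact your treatment is slightly more careful than the paper's, since you explicitly flag the boundary condition of Lemma~\ref{lem:min_gaps} (its restriction to $t \in [1, T-\delta]$, i.e.\ the implicit assumption that the lifetime $T$ is long enough), which the paper's proof passes over in silence.
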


\begin{proof}
    Let $G$ be the underlying graph of $\mathcal{G}$, and let $\mathcal{T}$ be a minimum weight spanning tree of $G$. Observe that a walk $\tau = (e_1, e_2, \dots, e_m)$ exploring all the vertices of $\mathcal{T}$, and thus $\mathcal{G}$, starting at $v_x$ can be determined requiring at most $2n$ edges. We assume, for simplicity, that $e_i$ simply represents the $i^{th}$ edge taken in $\tau$, and thus there may exist some pair $e_i, e_j$ where $i < j$ and $e_i = e_j$.

    We construct a temporal walk exploring $\mathcal{G}$ from $\tau$ as follows. In the first timestep, the agent either moves along the edge $e_1$, if this edge is active, or waits at vertex $v_x$ until $e_1$ becomes active otherwise. In general, after traversing edge $e_i$ and arriving at vertex $v_y$, the agent will wait at $v_y$ until $e_{i + 1}$ becomes active, at which point the agent will move along $e_{i + 1}$.
    To determine the length of this temporal walk, observe that, by Lemma \ref{lem:min_gaps}, the agent must wait at the vertex $v_y$ for at most $\delta$ timesteps for the edge $(v_y, v_z)$ to become active, for any $v_y \in V$, $v_z \in N(v_i)$. Thus, the temporal walk requires $2\delta \vert V \vert$ timesteps.
\end{proof}

\section{General Graphs}

We now consider general word-representable temporal graphs. We assume, without loss of generality, that all graphs in the section contain a single component in the underlying graph, noting that any graph that does not satisfy this can not be explored. As in Section \ref{sec:always_connected}, we start with some structural results on the word representations of these graphs.

\begin{lemma}
    \label{lem:gaps_adjacent}
    Given a word-representable temporal graph $\mathcal{G}$ represented by the word $w$ containing the edge $(v_x, v_y)$. Then, $\vert \pi_{\{x\}} \vert - 1 \leq \vert \pi_{\{y\}} \vert \leq \vert \pi_{\{x\}} \vert + 1$.
\end{lemma}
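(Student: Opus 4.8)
The statement to prove is Lemma~\ref{lem:gaps_adjacent}: if the edge $(v_x, v_y)$ is present in the word-representable graph, then the number of occurrences of $x$ and $y$ in $w$ differ by at most one, i.e.\ $\vert \pi_{\{x\}}(w) \vert - 1 \leq \vert \pi_{\{y\}}(w) \vert \leq \vert \pi_{\{x\}}(w) \vert + 1$.

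The plan is to work directly from the definition of what it means for the edge $(v_x, v_y)$ to exist. The key observation is that $\vert \pi_{\{x\}}(w) \vert$ counts exactly the number of occurrences of the symbol $x$ in $w$, and similarly for $y$; moreover $\pi_{\{x,y\}}(w)$ is the interleaving of these two occurrence sequences in the order they appear in $w$. First I would recall that, by the definition of word-representability, the edge $(v_x, v_y)$ exists if and only if $x$ and $y$ alternate in $w$, meaning $\pi_{\{x,y\}}(w) \in \{(xy)^k, (xy)^k x, (yx)^k, (yx)^k y \mid k \in \mathbb{N}_0\}$. The whole argument reduces to counting the number of $x$'s and the number of $y$'s in each of these four canonical alternating patterns.

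The core of the proof is a simple case analysis on the four forms of $\pi_{\{x,y\}}(w)$. In the form $(xy)^k$ there are exactly $k$ copies of $x$ and $k$ copies of $y$, so the counts are equal; in $(yx)^k$ likewise both counts equal $k$. In the form $(xy)^k x$ there are $k+1$ copies of $x$ and $k$ copies of $y$, so they differ by exactly one; symmetrically $(yx)^k y$ has $k$ copies of $x$ and $k+1$ copies of $y$. Since $\pi_{\{x\}}(w)$ and $\pi_{\{y\}}(w)$ simply extract, respectively, all the $x$'s and all the $y$'s from $\pi_{\{x,y\}}(w)$ (projecting away the other symbol commutes with the double projection), $\vert \pi_{\{x\}}(w) \vert$ and $\vert \pi_{\{y\}}(w) \vert$ equal these counts. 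In every one of the four cases the two counts differ by at most one, which is precisely the claimed inequality.

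I do not expect any genuine obstacle here, since the result is essentially bookkeeping once the alternation condition is unpacked; the only point requiring slight care is the implicit identity $\pi_{\{x\}}(w) = \pi_{\{x\}}(\pi_{\{x,y\}}(w))$, which justifies replacing $w$ by its two-letter projection before counting. I would state this commutativity of nested projections explicitly (it follows immediately from the recursive definition of $\pi_{\mathcal{S}}$, since restricting to $\{x\}$ discards every symbol that restricting to $\{x,y\}$ would also discard, plus the symbol $y$). With that remark in place, the four-case count completes the argument, and the same bound holds symmetrically in $x$ and $y$, yielding the two-sided inequality.
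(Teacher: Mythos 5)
Your proof is correct and follows essentially the same route as the paper's: both observe that the edge forces $\pi_{\{x,y\}}(w)$ to take one of the four alternating forms $(xy)^k$, $(xy)^k x$, $(yx)^k$, $(yx)^k y$, and then count occurrences to see the two multiplicities differ by at most one. Your version merely makes explicit the case analysis and the nested-projection identity $\pi_{\{x\}}(w) = \pi_{\{x\}}(\pi_{\{x,y\}}(w))$ that the paper leaves implicit.
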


\begin{proof}
    Observe that $\pi_{\{x, y\}} \in \{(x y)^k, (x y)^k x, (y, x)^k, (y, x)^k y\}$. Therefore, there can be at most one more copy of the symbol $x$ (resp. the symbol $y$) than $y$ (resp. $x$). 
\end{proof}

\begin{lemma}
    \label{lem:gaps_further}
    Let $v_x, v_y \in V$ be a pair of vertices such that $\dist(v_x, v_y) = d$. Then, $\vert \pi_{\{x\}}(w) \vert - d \leq \vert \pi_{\{x\}}(w) \vert \leq \vert \pi_{\{x\}} \vert + d$.
\end{lemma}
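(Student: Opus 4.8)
The plan is to prove the natural distance-$d$ generalization of Lemma~\ref{lem:gaps_adjacent}, reading the middle term of the claimed inequality as the occurrence count of $y$ (the only reading consistent with the $d=1$ base case), namely $\vert \pi_{\{x\}}(w) \vert - d \leq \vert \pi_{\{y\}}(w) \vert \leq \vert \pi_{\{x\}}(w) \vert + d$. The approach is a telescoping argument along a shortest walk from $v_x$ to $v_y$, using Lemma~\ref{lem:gaps_adjacent} once per edge.

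First I would fix a walk realizing the distance: since $\dist(v_x, v_y) = d$, there is a sequence of vertices $v_x = u_0, u_1, \dots, u_d = v_y$ with $(u_{i-1}, u_i)$ an edge of the underlying graph $U(\mathcal{G})$ for each $i \in [1,d]$, which exists because this section assumes $U(\mathcal{G})$ is connected. Writing $z_i$ for the symbol labelling $u_i$, each consecutive pair $z_{i-1}, z_i$ forms an edge, so Lemma~\ref{lem:gaps_adjacent} applies to it and yields $\big\vert\, \vert \pi_{\{z_{i-1}\}}(w) \vert - \vert \pi_{\{z_i\}}(w) \vert \,\big\vert \leq 1$.

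The key step is then to sum these $d$ single-edge bounds via the triangle inequality:
\[
\big\vert\, \vert \pi_{\{x\}}(w) \vert - \vert \pi_{\{y\}}(w) \vert \,\big\vert \;\leq\; \sum_{i=1}^{d} \big\vert\, \vert \pi_{\{z_{i-1}\}}(w) \vert - \vert \pi_{\{z_i\}}(w) \vert \,\big\vert \;\leq\; d,
\]
which is exactly the two-sided inequality claimed. Equivalently this can be phrased as an induction on $d$, with base case $d=1$ given directly by Lemma~\ref{lem:gaps_adjacent} and the inductive step appending one edge of a shortest walk and invoking the adjacent-vertex bound once more.

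I expect no serious obstacle here, as the argument is essentially routine. The one point requiring care is confirming that the length-$d$ witness for the distance genuinely decomposes into edges of $U(\mathcal{G})$, so that Lemma~\ref{lem:gaps_adjacent} is applicable at each step; this is immediate from the definition of $\dist$ as a minimum-length walk. I would also remark that the intermediate vertices $u_i$ need not be distinct when working with a walk rather than a path, but this does not affect the argument, since the triangle inequality over the per-edge bounds holds regardless.
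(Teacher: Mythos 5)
Your proposal is correct and takes essentially the same approach as the paper: the paper also proves the lemma by induction on $d$, peeling off the last edge of a shortest path (a vertex $v_z$ with $\dist(v_x,v_z)=d-1$ and $(v_y,v_z)\in E$) and applying Lemma~\ref{lem:gaps_adjacent} once per edge, which is exactly the unrolled form of your telescoping sum. You also correctly resolved the statement's typo by reading the middle term as $\vert \pi_{\{y\}}(w) \vert$, which is the reading the paper's own proof uses.
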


\begin{proof}
    We prove this by induction. Note that the case where $d = 1$ is given by Lemma \ref{lem:gaps_adjacent}. Assume now that, for every $d' \in [1, d - 1]$, this statement holds, and consider some pair $v_x, v_y \in V$ where $\dist(v_x, v_y) = d$. Let $v_z \in V$ be some vertex where $\dist(v_x, v_z) = d - 1$ and $(v_y, v_z) \in E$. Then, by the above assumption we have that $\vert \pi_{\{x\}}(w) \vert - (d - 1) \leq \vert \pi_{\{z\}}(w) \vert \leq \vert \pi_{\{x\}}(w) \vert + (d - 1)$. Further, by Lemma \ref{lem:gaps_adjacent}, we have $\vert \pi_{\{y\}}(w) \vert - 1 \leq \vert \pi_{\{z\}}(w) \vert \leq \vert p_{y}(w) \vert + 1$ and thus $\vert \pi_{\{x\}}(w) \vert - d \leq \vert \pi_{\{y\}}(w) \vert \leq \pi_{\{x\}}(w) \vert \leq \pi_{\{x\}}(w) \vert + d$.
\end{proof}

\begin{lemma}
	\label{lem:gaps_betwixt_symbols}
	Let $v_x, v_y \in V$ be a pair of vertices such that $\dist(v_x, v_y) = d$. Further, let $\chi_{i}$ be the index of the $i^{th}$ occurrence of $x$ in $w$, and let $\gamma_j$ be the index of the $j^{th}$ occurrence of $y$ in $w$. Then $\chi_{i -d} \leq \gamma_i \leq \chi_{i + d}$.
\end{lemma}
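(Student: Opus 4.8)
The plan is to prove this by induction on $d$, closely paralleling the inductive structure already used for Lemma \ref{lem:gaps_further}, but now tracking the \emph{positions} of occurrences rather than merely their \emph{counts}. Throughout I would adopt the boundary convention that $\chi_j = 0$ whenever $j \le 0$ and $\chi_j = \vert w \vert + 1$ whenever $j$ exceeds the number of occurrences of $x$; with this convention every out-of-range instance of the claimed inequality is trivially true, and the interesting content lies entirely in the in-range cases.

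For the base case $d = 1$, I would use the defining alternation property of the edge $(v_x, v_y)$: the projection $\pi_{\{x,y\}}(w)$ is one of $(xy)^k$, $(xy)^k x$, $(yx)^k$, $(yx)^k y$, so in $w$ the occurrences of $x$ and $y$ strictly interleave. I would split into two cases according to whether this interleaving begins with $x$ or with $y$. If it begins with $x$, the occurrences appear in the order $\chi_1, \gamma_1, \chi_2, \gamma_2, \dots$, so $\chi_i < \gamma_i < \chi_{i+1}$; if it begins with $y$, the order is $\gamma_1, \chi_1, \gamma_2, \chi_2, \dots$, so $\chi_{i-1} < \gamma_i < \chi_i$. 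Either way $\chi_{i-1} \le \gamma_i \le \chi_{i+1}$, which is exactly the statement for $d = 1$.

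For the inductive step I would fix $v_x, v_y$ with $\dist(v_x, v_y) = d$ and choose an intermediate vertex $v_z$ with $\dist(v_x, v_z) = d-1$ and $(v_z, v_y) \in E$ (such a $z$ exists on a shortest path from $v_x$ to $v_y$). Writing $\zeta_k$ for the index of the $k$-th occurrence of $z$, the induction hypothesis applied to the pair $(v_x, v_z)$ gives $\chi_{k-(d-1)} \le \zeta_k \le \chi_{k+(d-1)}$ for every $k$, and the base case applied to the edge $(v_z, v_y)$ gives $\zeta_{i-1} \le \gamma_i \le \zeta_{i+1}$. Chaining these is then purely a matter of index bookkeeping: from $\gamma_i \le \zeta_{i+1} \le \chi_{(i+1)+(d-1)} = \chi_{i+d}$ I get the upper bound, and from $\gamma_i \ge \zeta_{i-1} \ge \chi_{(i-1)-(d-1)} = \chi_{i-d}$ I get the lower bound, completing the induction.

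The main obstacle I anticipate is not the algebra of the chaining but the careful handling of the boundary indices. Because the counts of $x$, $y$, and $z$ may differ (by Lemmas \ref{lem:gaps_adjacent} and \ref{lem:gaps_further}), the indices $\zeta_{i\pm 1}$ and $\chi_{i\pm d}$ invoked above can fall outside the valid range, and the chaining steps must remain valid (or vacuously so) in those cases. The fixed conventions $\chi_j = 0$ for $j \le 0$ and $\chi_j = \vert w \vert + 1$ beyond the last occurrence --- and analogously for $\zeta$ --- are what make the argument uniform; I would verify once that these conventions are consistent across both the base case and the inductive step so that no genuine (in-range) occurrence is ever bounded by a spurious sentinel value.
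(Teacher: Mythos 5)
Your proof is correct and takes essentially the same approach as the paper's: induction on $d$, with the base case read off from the alternation property of the edge $(v_x, v_y)$, an intermediate vertex $v_z$ chosen on a shortest path with $\dist(v_x,v_z)=d-1$ and $(v_z,v_y)\in E$, and the two sandwich inequalities chained together. Your explicit sentinel conventions for out-of-range indices are a careful refinement that the paper's (terser) proof glosses over, but the core argument is identical.
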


\begin{proof}
	We prove this inductively. First, consider the case where $d = 1$. Then, as $(v_x, v_y) \in E$, we have that $\pi_{\{x, y\}} \in \{(x y)^k, (x y)^k x, (y x)^k, (y, x)^k y \mid k \in \mathbb{N} \}$, the statement holds directly. Now, assume this holds for every value $d' \leq d - 1$ and let $v_x, v_y, v_z$ be some set of  vertices such that $\dist(v_x, v_z) = d - 1$, $(v_y, v_z) \in E$, and $\dist(v_x, v_y) = d$. Further, let $\zeta_i$ be the index of the $i^{th}$ occurrence of $z$ in $w$. Then, as $\chi_{i - (d - 1)} \leq  \zeta_i \leq \chi_{i + (d - 1)}$, and $\gamma_{i - 1} \leq \zeta_i \leq \gamma_{i + 1}$, we have $\chi_{i - d} \leq \gamma_i \leq \chi_{i + d}$.
\end{proof}

\begin{corollary}
    \label{col:everything_in_the_first_d_steps}
    Given a word-representable temporal graph $\mathcal{G}$, represented by the word $w$, $U(\mathcal{G}) = (V, \bigcup_{t \in [1, d]} E_t)$, where $d$ is the diameter of $U(\mathcal{G})$.
\end{corollary}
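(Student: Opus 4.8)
The plan is to establish the nontrivial inclusion $\bigcup_{t \in [1,T]} E_t \subseteq \bigcup_{t \in [1,d]} E_t$; the reverse inclusion is immediate, and the left-hand union is exactly the edge set of $U(\mathcal{G})$ because every symbol, and hence every edge of $G(w)$, occurs at least once in $w$. Since an edge $(v_x, v_y)$ is active in timestep $t$ precisely when $x$ or $y$ lies in the timestep factor $w[S_t, S_{t+1}-1]$, it suffices to prove that every symbol first occurs within one of the first $d$ timestep factors; then at least one endpoint of each edge already appears by timestep $d$. Write $f(x)$ for the index of the timestep factor containing the first occurrence of $x$.

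First I would fix the reference symbol $r = w[1]$, which by definition lies in the first timestep factor (so $f(r) = 1$), and use that $U(\mathcal{G})$ is connected of diameter $d$, giving $\dist(v_r, v_x) \le d$ for every vertex $v_x$. Applying Lemma~\ref{lem:gaps_betwixt_symbols} to the pair $(r, x)$ with $d_x = \dist(v_r, v_x)$ and $i = 1$ bounds the position of the first occurrence of $x$ by that of the $(1 + d_x)$-th occurrence of $r$; in terms of raw occurrence ranks, every symbol therefore appears no later than the $(1+d)$-th occurrence of $r$. Lemma~\ref{lem:gaps_further} supplies the complementary intuition that all symbols occur a comparable number of times, differing by at most $d$.

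The heart of the argument is to convert this occurrence-rank bound into a bound on $f(x)$. I would carry this out by induction along a shortest path $r = u_0, u_1, \dots, u_{d_x} = x$: the base case is $f(u_0) = 1$, and for the inductive step I would use that, since $(u_{j-1}, u_j)$ is an edge, $u_{j-1}$ and $u_j$ alternate in $w$ (Lemma~\ref{lem:gaps_adjacent}), so the first $u_j$ precedes the second $u_{j-1}$. Combined with the maximal-distinct-run definition of the start points $S_1, S_2, \dots$ — each neighbour of a symbol must reappear between its consecutive occurrences — this is meant to show $f(u_j) \le f(u_{j-1}) + 1$, whence $f(x) \le d_x + 1 \le d + 1$.

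The hard part will be exactly this rank-to-factor conversion, and in particular trimming the spurious $+1$ down to the claimed bound of $d$. The difficulty is that a symbol need not occur in every factor, so the $(1+d)$-th occurrence of $r$ can in principle sit far beyond the $d$-th timestep factor, making the direct reading of Lemma~\ref{lem:gaps_betwixt_symbols} too weak on its own; the shortest-path induction is designed to sidestep this by tracking $f(\cdot)$ directly, but the step bounding the factor-gap between two consecutive occurrences of a single symbol by one, together with the final accounting that the first factor simultaneously absorbs the reference symbol $r$ and initiates the neighbour chain, is where the structural properties of the factorisation must be invoked most carefully. This is the step I expect to require the most attention.
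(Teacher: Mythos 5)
There is a genuine gap, and it sits exactly where you predicted it would: the rank-to-factor conversion. Worse, the inductive step you propose, $f(u_j) \le f(u_{j-1})+1$ for an edge $(u_{j-1},u_j)$, is not just unproven but false. Alternation controls \emph{occurrence counts}, never \emph{factor indices}: nothing in the maximal-distinct-run factorisation prevents consecutive occurrences of a symbol from spanning several timestep factors, because the intervening factors can be built entirely from other symbols repeating among themselves. Concretely, take $w = u\,q\,r\,q\,s\,m\,r\,s\,v\,u$. Its start points are $S_1 = 1$, $S_2 = 4$, $S_3 = 8$, so the timestep factors are $uqr$, $qsmr$, $svu$. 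The represented graph has edge set $\{(u,m),(u,v),(m,v),(m,s),(m,r),(r,s),(q,r)\}$ (e.g.\ $\pi_{\{u,v\}}(w) = uvu$ and $\pi_{\{u,m\}}(w)=umu$, while $\pi_{\{u,q\}}(w)=uqqu$ fails to alternate); it is connected with diameter $3$. Yet for the edge $(u,v)$, the symbol $u$ first occurs in factor $1$ while $v$ first occurs in factor $3$, so $f(v) = f(u)+2$. Your justification --- ``the first $u_j$ precedes the second $u_{j-1}$, and neighbours reappear between consecutive occurrences'' --- cannot rule this out, because the second occurrence of $u$ here lies in factor $3$ as well; the number of factor boundaries between two consecutive occurrences of a symbol is simply unbounded by any local alternation argument. (Note this example does not contradict the corollary itself: $(u,v)$ is active in timestep $1$ because $u$ lies in the first factor. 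The corollary only needs \emph{one} endpoint of each edge to appear early, which is weaker than the per-symbol claim your induction tries to establish, and your induction is refuted.)

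The second gap you flag yourself and do not resolve: even granting the step, you would conclude that every symbol appears within the first $d+1$ factors, giving activity in the first $d+1$ timesteps rather than the claimed $d$. For what it is worth, the paper offers no proof of this corollary at all --- it is stated as an immediate consequence of Lemmas \ref{lem:gaps_adjacent}--\ref{lem:gaps_betwixt_symbols} --- and your diagnosis of why the direct reading of Lemma \ref{lem:gaps_betwixt_symbols} is too weak is correct: since each factor contains each symbol at most once, the $(d+1)$-th occurrence of $r=w[1]$ necessarily lies in factor at least $d+1$, so bounding first occurrences by it can never certify membership in the first $d$ factors. So your reduction and your identification of the obstacle are sound, but the proposal as written does not prove the statement, and its central inductive step is false as shown above.
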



From Lemmas \ref{lem:gaps_adjacent}, \ref{lem:gaps_further}, \ref{lem:gaps_betwixt_symbols} and Corollary \ref{col:everything_in_the_first_d_steps}, we have the main tools used to build our algorithmic result. At a high level, we use a similar technique to Theorem \ref{thm:time_complexity_of_exploring_always_connected_graphs}, building a temporal walk exploring the graph from a walk exploring the spanning tree of the underlying graph. Lemma \ref{lem:general_graphs_waiting_time} and Corollary \ref{col:general_graphs_waiting_time_full} provide the last needed results for the approach, showing that each edge in the graph is inactive for at most $d$ consecutive timesteps.

\begin{lemma}
    \label{lem:general_graphs_waiting_time}
    Given a word-representable temporal graph $\mathcal{G}$, represented by the word $w$ such that $U(G)$ has a diameter of $d$. Then, given any edge $(v_x, v_j)$ active in timestep $t \in [1, T - d - 1]$, $\exists t' \in [t + 1, t + d + 1]$ such that $(v_x, v_y) \in E_{t'}$.
\end{lemma}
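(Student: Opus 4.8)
The plan is to reduce the statement to a purely combinatorial claim about the factorisation of $w$ and then attack it with the interleaving bound of Lemma \ref{lem:gaps_betwixt_symbols}. First I would record the basic dictionary between activity and factors: by construction $(v_x,v_y)\in E_{t'}$ exactly when the timestep factor $w[S_{t'},S_{t'+1}-1]$ contains $x$ or $y$, and since every timestep factor consists of pairwise distinct symbols, the $i$-th occurrence of a symbol $s$ lies in the $i$-th factor that contains $s$. Writing $F_s(i)$ for that factor index, the problem becomes: starting from a factor in the window that contains $x$ or $y$, one of $x,y$ must reappear within the next $d+1$ factors; equivalently, there is no block of $d+1$ consecutive factors avoiding both $x$ and $y$. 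It suffices to prove the single-symbol version, that consecutive occurrences of $x$ lie at most $d+1$ factors apart, since activity at $t$ already places an occurrence of $x$ (or of $y$) in factor $t$.

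Next I would translate Lemma \ref{lem:gaps_betwixt_symbols} from positions to factor indices. Because position order and factor order agree for a fixed symbol, the bound $\chi_{i-\ell}\le\gamma_i\le\chi_{i+\ell}$ for a pair at distance $\ell$ becomes $F_x(i-\ell)\le F_z(i)\le F_x(i+\ell)$ for every $z$ with $\dist(v_x,v_z)=\ell$. Since $U(\mathcal{G})$ is connected of diameter $d$, every symbol satisfies $\dist(v_x,v_z)\le d$, so each $F_z$ is locked to $F_x$ up to a shift of $d$ in the occurrence index. I would then argue that if $x$ were absent from a block of $d+1$ consecutive factors, say factors $t+1,\dots,t+d+1$ lying between the $i$-th and $(i+1)$-th occurrences of $x$, then every symbol $z$ occurring in this block has occurrence index confined to $[\,i-\dist(v_x,v_z)+1,\; i+\dist(v_x,v_z)\,]$, hence at most $2\dist(v_x,v_z)\le 2d$ occurrences inside it. The aim is to combine this with the global balance of Lemma \ref{lem:gaps_further} (all occurrence counts lie within $d$ of one another) and with Corollary \ref{col:everything_in_the_first_d_steps} to force the block to demand more occurrences than the surrounding gap can supply, yielding a contradiction.

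I expect the conversion from occurrence-order to factor-gaps to be the main obstacle. The interleaving bound constrains only the relative order of the $i$-th occurrences of different symbols, and it does not by itself forbid a long $x$-free block that is filled by many \emph{different} symbols, each appearing a bounded number of times; a naive occurrence count has capacity $\sum_{z}2\dist(v_x,v_z)$, which far exceeds the $d+1$ occurrences that a block of $d+1$ nonempty factors is guaranteed to contain. To close this gap I would exploit two further structural features of the factorisation that the pairwise lemmas do not use: that consecutive factors always share the boundary symbol $w[S_{t+1}]$ responsible for the split, which threads the block together, and that the distinct-letter property of each factor forces the participating symbols to be spread across the BFS layers around $v_x$. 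The intended argument is then an induction on these layers, in which a deficiency of $x$ over $d+1$ steps propagates, via the edge-level interleaving of Lemma \ref{lem:gaps_adjacent} applied along a shortest path, into a simultaneous deficiency of a symbol at distance $d$; by Lemma \ref{lem:gaps_betwixt_symbols} such a symbol cannot drift more than $d$ and so cannot stay silent for the whole block. Making this propagation quantitatively tight enough to yield exactly the bound $d+1$, rather than a weaker multiple of $d$, is the delicate point.
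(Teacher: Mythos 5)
Your setup is sound and in fact mirrors the paper's opening moves: assume $(v_x,v_y)$ is active at $t$ but inactive for the next $d+1$ timesteps, conclude that neither $x$ nor $y$ occurs anywhere in the corresponding block of consecutive timestep factors, and seek a contradiction with Lemma \ref{lem:gaps_betwixt_symbols}. Your translation of that lemma into factor indices is legitimate, and your count that any symbol $z$ can occur at most $2\dist(v_x,v_z)\le 2d$ times between two consecutive occurrences of $x$ is correct. The problem is that the proposal stops exactly where the proof has to happen. You identify the obstacle yourself: pairwise interleaving bounds applied symbol-by-symbol cannot rule out a long $\{x,y\}$-free block populated by many \emph{different} symbols, each occurring few times. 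The two devices you then offer --- the boundary symbol shared by consecutive factors, and an induction over BFS layers propagating a ``deficiency'' of $x$ --- are never turned into an argument: no claim is formulated, no induction hypothesis is stated, and you explicitly concede that obtaining the bound $d+1$ rather than a multiple of $d$ is unresolved. An attempt that ends by naming its own missing step is a gap, not a proof.

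The missing idea is a single concrete counting claim, and it is how the paper closes the argument: in a window of consecutive timestep factors avoiding both $x$ and $y$, some \emph{one} symbol $z$ must accumulate $d$ occurrences. This is where the defining property of the factorisation does real work --- a timestep factor ends precisely because the first symbol of the next factor already occurs in it, so the factors of the window are chained together by forced repetitions rather than being free to use fresh symbols --- and once a single $z$ has $d$ occurrences inside a region containing no $x$, Lemma \ref{lem:gaps_betwixt_symbols} yields the contradiction directly, since the occurrence index of $z$ cannot run ahead of that of $x$ by more than $\dist(v_x,v_z)$. In other words, the step you lacked was producing one symbol whose occurrences pile up beyond what the interleaving bound permits, instead of bounding every symbol separately and summing, which (as you observed) proves nothing. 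In fairness, the paper states the existence of such a $z$ tersely, without spelling out the pigeonhole over factor boundaries, so the difficulty you ran into is genuinely the crux of the lemma; but your proposal leaves that crux unmet, so it does not constitute a proof.
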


\begin{proof}
    Assume, for the sake of contradiction, that we have some edge $(v_x, v_y)$ such that $(v_x, v_y) \in E_t$ and, $\forall t' \in [t + 1, t + d]$, $(v_x, v_y) \notin T_t'$. Now, let $\xi \in [S_{t}, S_{t + 1} - 1]$ be the position in $w$ such that $w[\xi] \in \{x, y\}$ and, $\forall \xi' \in [\xi + 1, S_{t + 1} - 1]$, $w[\xi'] \notin \{x, y\}$. By extension, $\forall i \in [\xi + 1, S_{t + d + 1}]$, $w[i] \notin \{x, y\}$. Then, there must exist some symbol $z \in \Sigma$ such that $\vert \pi_{\{z\}}(w[S_t, S_{t + d + 1} - 1]) \vert = d$. However, by Lemma \ref{lem:gaps_betwixt_symbols}, we have a contradiction. Thus, the statement holds.
\end{proof}

\begin{corollary}
    \label{col:general_graphs_waiting_time_full}
    Given a word-representable temporal graph $\mathcal{G}$, represented by the word $w$ such that the underlying graph $U(\mathcal{G})$ has a diameter of $d$. Then, for any $t \in [1, T - d]$, $U(\mathcal{G}) = (V, \bigcup_{d' \in [0, d]}E_{t + d'})$.
\end{corollary}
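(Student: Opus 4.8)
The plan is to prove the stated equality of graphs by showing mutual containment of their edge sets, where one direction is immediate and the other carries all the content. The inclusion $\bigcup_{d' \in [0,d]} E_{t+d'} \subseteq U(\mathcal{G})$ is trivial, since by definition each $E_{t+d'}$ consists only of edges of the underlying graph. Hence the real work is the reverse inclusion: for any fixed $t \in [1, T-d]$, every edge of $U(\mathcal{G})$ is active at some timestep in the window $[t, t+d]$.

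To establish this I would fix an arbitrary edge $e = (v_x, v_y)$ of $U(\mathcal{G})$ and let $a_1 < a_2 < \cdots < a_k$ enumerate all timesteps at which $e$ is active. Two earlier results pin down where these appearances can lie. First, Corollary \ref{col:everything_in_the_first_d_steps} guarantees $a_1 \leq d$, anchoring the first appearance inside the first $d$ timesteps. Second, Lemma \ref{lem:general_graphs_waiting_time} guarantees that whenever $a_i \leq T - d - 1$, the next appearance satisfies $a_{i+1} \leq a_i + d + 1$; equivalently, $e$ can never be absent for more than $d$ consecutive timesteps once it has appeared.

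With these in hand I would argue by contradiction. Suppose $e$ is inactive throughout $[t, t+d]$. Let $a_i$ be the largest appearance with $a_i \leq t+d$; such an $a_i$ exists because $a_1 \leq d \leq t+d$. The assumption forces $a_i \leq t-1$. Since $t \leq T-d$, we have $a_i \leq t-1 \leq T-d-1$, so Lemma \ref{lem:general_graphs_waiting_time} applies and yields a further appearance $a_{i+1} \leq a_i + d + 1 \leq t+d$. This contradicts the maximality of $a_i$, so $e$ must in fact appear somewhere in $[t, t+d]$, and the reverse inclusion follows.

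I expect the main obstacle to be boundary bookkeeping rather than any deep idea: Lemma \ref{lem:general_graphs_waiting_time} only controls the gap between two \emph{existing} appearances, so on its own it cannot exclude an edge simply failing to appear early enough. The role of Corollary \ref{col:everything_in_the_first_d_steps} is precisely to anchor the first appearance, and I would take care that the index ranges align — in particular verifying $a_i \leq T-d-1$ so that the Lemma is applicable, which is exactly where the hypothesis $t \in [1, T-d]$ is used. Once these ranges are checked, the maximal-appearance contradiction closes the argument cleanly.
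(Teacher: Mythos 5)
Your proof is correct and follows exactly the paper's approach: the paper's own proof is the one-line remark that the result ``follows from Corollary~\ref{col:everything_in_the_first_d_steps} and Lemma~\ref{lem:general_graphs_waiting_time}'', and your argument simply fills in the details of that deduction (anchoring the first appearance via the corollary, then propagating with the lemma and a maximal-appearance contradiction). The index bookkeeping you flag, in particular checking $a_i \leq T-d-1$ so the lemma applies, is handled correctly.
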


\begin{proof}
    Follows from Corollary \ref{col:everything_in_the_first_d_steps} and Lemma \ref{lem:general_graphs_waiting_time}.
\end{proof}

From Lemma \ref{lem:general_graphs_waiting_time} and Corollary \ref{col:general_graphs_waiting_time_full}, we main now apply a similar technique to Theorem \ref{thm:time_complexity_of_exploring_always_connected_graphs} to derive an algorithm for exploring word-representable temporal graphs of length at least $n(2 d n + d)$ in $2 d n$ timesteps.

\begin{theorem}
    \label{thm:exploration_upper_bounds}
    Let $\mathcal{G} = (V, E_1, E_2, \dots, E_T)$ be a word-representable temporal graph, represented by the word $w$ such that $\vert w \vert \geq n(2 d n + d)$, where $d$ is the diameter of the underlying graph $U(\mathcal{G}) = (V, \bigcup_{t \in [T]} E_t)$. Then, $\mathcal{G}$ can be explored in at most $2 d n$ timesteps.
\end{theorem}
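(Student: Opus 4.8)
The plan is to mirror the argument of Theorem~\ref{thm:time_complexity_of_exploring_always_connected_graphs}, replacing the always-connected waiting guarantee by the diameter-based reappearance guarantee established in Corollary~\ref{col:general_graphs_waiting_time_full}. Since we assume throughout this section that $U(\mathcal{G})$ is connected, I would first fix a spanning tree $\mathcal{T}$ of $U(\mathcal{G})$ and take a walk $\tau = (e_1, e_2, \dots, e_m)$ that visits every vertex of $\mathcal{T}$ starting from $v_x$; a depth-first traversal of $\mathcal{T}$ crosses each tree edge at most twice, so $m \le 2(n-1) \le 2n$. Crucially, every edge of $\tau$ lies in $U(\mathcal{G})$, which is precisely the setting in which the reappearance results apply.

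Next I would turn $\tau$ into a temporal walk greedily, exactly as before: the agent follows the edges of $\tau$ in order, and upon arriving at a vertex after traversing $e_i$ at time $t_i$, it waits there until $e_{i+1}$ is next active and then crosses it. The waiting time is controlled by Corollary~\ref{col:general_graphs_waiting_time_full}: for every $t \in [1, T-d]$ the window $[t, t+d]$ of $d+1$ consecutive timesteps contains all of $U(\mathcal{G})$, and so in particular contains $e_{i+1}$. Hence, once the agent is ready to move at time $t_i + 1$, the edge $e_{i+1}$ becomes active within the next $d$ timesteps, with the base case $e_1$ handled by Corollary~\ref{col:everything_in_the_first_d_steps}. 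Charging at most $d$ timesteps to each of the $\le 2n$ edges of $\tau$ yields a temporal walk of length at most $2dn$ that visits every vertex.

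The length hypothesis $\vert w \vert \ge n(2dn+d)$ enters only to guarantee that the lifetime $T$ is large enough for this schedule to run to completion and for the waiting guarantees to remain valid throughout. By construction each timestep factor $w[S_t, S_{t+1}-1]$ consists of pairwise distinct symbols, so it has length at most $n$; summing over the $T$ factors gives $\vert w \vert \le nT$, whence $T \ge \vert w \vert / n \ge 2dn + d$. In particular $T - d \ge 2dn$, so Corollary~\ref{col:general_graphs_waiting_time_full}, which requires $t \in [1, T-d]$, is applicable at every timestep the agent uses, and the target time $2dn$ does not exceed the lifetime.

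I expect the main obstacle to be the bookkeeping rather than any new idea. Two points need care. First, the conversion from word length to lifetime must be made precise: one has to confirm the $T \ge \vert w \vert / n$ bound and check that the corollary's index range $[1, T-d]$ covers all the timesteps at which the agent waits, so that no appeal to the reappearance guarantee falls outside its valid window. Second, the constant in the final count is tight only up to the usual $\pm 1$ slack, since the window $[t, t+d]$ spans $d+1$ timesteps and the tree walk uses $2(n-1)$ rather than $2n$ edges; the accounting that collapses this to the clean bound $2dn$ should be stated explicitly to avoid an off-by-a-lower-order-term error.
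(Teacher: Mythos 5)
Your proposal matches the paper's proof essentially step for step: a spanning-tree walk of at most $2n$ edges, converted greedily into a temporal walk with each wait bounded by $d$ via Corollary~\ref{col:general_graphs_waiting_time_full}, and the hypothesis $\vert w \vert \geq n(2dn+d)$ used only to ensure the lifetime satisfies $T \geq 2dn + d$. The one minor difference is how that lifetime bound is derived---you count directly via the fact that every timestep factor has pairwise-distinct symbols and hence length at most $n$, whereas the paper pigeonholes a symbol occurring at least $2dn+d$ times and invokes Lemma~\ref{lem:gaps_further}---and the $\pm 1$ slack in the accounting that you flag is a legitimate concern, but it is equally present (and glossed over) in the paper's own proof.
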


\begin{proof}
	First, note that we must have at least one symbol $x \in \Sigma$ such that $\vert \pi_{\{x\}}(w) \vert \geq 2 d n + d$. Thus, by Lemma \ref{lem:gaps_further}, we have that every symbol $y \in \Sigma$ satisfies $\vert \pi_{\{y\}}(w) \vert \geq 2 n d$. Further, we have $T \geq 2 n d + d$.
    
    Now, let $\mathcal{T}$ be a spanning tree of $U(\mathcal{G})$, and let $P = e_1, e_2, \dots, e_m$ be a walk exploring $\mathcal{T}$ starting from some arbitrary vertex $v \in V$. We construct our temporal walk, $Q$, exploring $\mathcal{G}$ as follows. At timestep 1, the agent attempts to traverse edge $e_1$ if active. Otherwise, the agent waits at the stating vertex until $e_1$ becomes active. Note that by Lemma \ref{lem:general_graphs_waiting_time}, the agent will be waiting at most $d$ timesteps for this edge to become active.

    In general, after traversing the edge $e_i$, the agent attempts to traverse $e_{i + 1}$, waiting at the current vertex $v_i$ until $e_{i + 1}$ becomes active. From Corollary \ref{col:general_graphs_waiting_time_full}, the edge $e_{i + 1}$ will become active at most $d$ timesteps after the agent arrives at $v_i$. By extension, the agent will wait for at most $d$ timesteps at $v_i$. Therefore, as the walk can contain at most $2 n$ edges, our exploration requires at most $2 d n$ timesteps, giving the theorem.
\end{proof}

\subsection{Lower Bounds}

We now provide some lower bounds on the fastest temporal walk exploring word-representable temporal graphs. Explicitly, we give a construction to provide, for any number of vertices $n$ and diameter $d$, a word-representable graph with $n$ vertices and diameter $d$ requiring $\Omega(d n)$ timesteps to explore.
%
We start by providing a construction for a length $n$ path requiring $\Omega(n^2)$ timesteps to explore, which we use as a gadget to build our main result.

\noindent
\textbf{Path Graph Construction.}
We construct our temporal path graphs as follows. Given some $n \in \mathbb{N}$, we construct a length $n$ path represented by the word $w \in \Sigma^{2 n}$ where $\Sigma = 1, 2, \dots, n$ ordered $1 < 2 < \dots < n$. First, let $w[1, 3] = 1 2 1$. Then, for each $x \in 2, 3, \dots, n - 1$, let $w[2 x] = x + 1$, where $(x + 1)$ is the symbol following $x$ in $\Sigma$, and $w[2 x + 1] = x$. Finally, let $w[2 n] = n$.

\begin{observation}
    \label{obs:positional_value_of_w_k}
    The construction given above constructs a word $w \in \Sigma^{2n}$ where, for any $k \in \mathbb{N}$, $\ell \in [4, 2n - 1]$, $k' \in [0, k - 1]$,
    $w^k[k' \cdot n + \ell] = \begin{cases}
        (\ell/2) + 1 & \text{$k$ is even,}\\
        (\ell - 1)/2 & \text{$k$ is odd.}
    \end{cases}$
\end{observation}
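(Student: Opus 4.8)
The plan is to use the fact that, by the definition of $w^k$, the word is purely periodic with period $|w| = 2n$: the letter $w^k[i]$ depends only on $i \bmod 2n$. So I would first reduce the index $k'\cdot n + \ell$ modulo $2n$ and locate the corresponding position inside a single copy of $w$. Each increment of $k'$ shifts the index by $n$, exactly half the period, so $k'\cdot n \equiv 0 \pmod{2n}$ when $k'$ is even and $k'\cdot n \equiv n \pmod{2n}$ when $k'$ is odd. In the aligned case the index collapses to $\ell$ itself, since $\ell \in [4,2n-1]$ already lies in the index range $[1,2n]$ of one copy, so that $w^k[k'n + \ell] = w[\ell]$ and the whole claim reduces to evaluating $w[\ell]$ directly; the shifted positions reduce to $(n+\ell) \bmod 2n$ and are handled by the same read-off, tracking the modular reduction and the induced parity flip explicitly.

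The second step is a direct substitution into the two defining rules of the construction, split on the parity of $\ell$. The construction sets $w[2x] = x+1$ on the even positions and $w[2x+1] = x$ on the odd positions, for every $x \in [2, n-1]$. Writing $\ell = 2x$ gives $w[\ell] = \ell/2 + 1$, and writing $\ell = 2x+1$ gives $w[\ell] = (\ell-1)/2$; these are exactly the two expressions in the claim, and the requirement that the chosen expression be integer-valued is precisely the parity condition that selects the matching branch. No induction is needed beyond unwinding the definition of $w^k$ and substituting into the two construction rules.

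The only genuine obstacle is the boundary-and-parity bookkeeping. The clean formula fails on the initial block $w[1,3]=121$ and on the final letter $w[2n]=n$ (there $\ell/2+1 = n+1 \neq n$), so I would check that the hypothesis $\ell \in [4,2n-1]$ excludes exactly these: $\ell \ge 4$ forces the index $x \ge 2$, so that only the general rules $w[2x]=x+1$ and $w[2x+1]=x$ are ever invoked rather than the initial $121$, while $\ell \le 2n-1$ avoids the special final letter. I would also fix the indexing convention for $w^k$ (reading a residue of $0$ as position $2n$) and verify the correspondence between the stated parity condition, the parity of the reduced position, and the even/odd split of $\ell$. Once this accounting is pinned down, the result follows by substitution with no further difficulty.
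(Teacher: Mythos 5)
The paper records this statement as an observation with no accompanying proof, so the comparison here is against what a correct proof must contain. Your overall strategy---reduce the index modulo $\vert w \vert = 2n$ and substitute into the construction rules $w[2x] = x + 1$, $w[2x+1] = x$---is the right one, and your treatment of the aligned case ($k'$ even, where $k' n \equiv 0 \pmod{2n}$), together with the boundary check that $\ell \in [4, 2n-1]$ avoids both the prefix $121$ and the final letter $w[2n] = n$, is a correct and complete argument for that case.

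The gap is the half-shifted case, which you defer to ``tracking the modular reduction and the induced parity flip explicitly.'' That step cannot be completed, because the claimed identity is false for odd $k'$. When $k'$ is odd the reduced position is $n + \ell$ (or $\ell - n$), and the letter there is given by the construction formulas evaluated at the \emph{reduced} coordinate, not at $\ell$: for $n = 5$, $w = 1\,2\,1\,3\,2\,4\,3\,5\,4\,5$, $k = 2$, $k' = 1$, $\ell = 6$, one gets $w^2[11] = w[1] = 1$, whereas the stated formula gives $(\ell/2) + 1 = 4$ (and $(\ell-1)/2$ is not even an integer). Moreover, after the shift the reduced position can land inside the irregular prefix (when $\ell \in [n+1, n+3]$) or on the final letter (when $\ell = n$), so the hypothesis $\ell \in [4, 2n-1]$ no longer performs the exclusion your third paragraph relies on. The discrepancy you flagged but did not resolve---between the stated parity-of-$k$ condition and the parity of $\ell$---is not reconcilable: the case split must be on the parity of $\ell$, and the offset must be a full period $2k'n$ rather than $k'n$. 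That corrected reading (positions $2k'n + \ell$; even $\ell$ gives $(\ell/2)+1$, odd $\ell$ gives $(\ell-1)/2$) is the form the paper actually uses later, in Lemmas \ref{lem:path_timesteps} and \ref{lem:path_timesteps_general}, and for it your first two paragraphs already constitute a complete proof. As written, however, your proposal waves through precisely the case in which the statement fails.
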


\begin{lemma}
    \label{lem:construction_gives_path}
    The construction given above constructs a word $w \in \Sigma^{2n}$ where $TG(w)$ a temporal path graph.
\end{lemma}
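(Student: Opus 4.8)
The plan is to compute, for every symbol $x \in \Sigma$, the exact set of positions at which $x$ occurs in $w$, and then apply the alternation criterion defining $G(w)$ to each pair of symbols directly. Since the construction writes each symbol exactly twice, the projection $\pi_{\{x,y\}}(w)$ onto any pair has length four, so deciding whether the pair alternates reduces to comparing four position values. Once I show that the edges of $G(w)$ are exactly the consecutive pairs $\{x,x+1\}$, the graph $G(w)$ is the path $1-2-\cdots-n$, and because $TG(w)$ is built on top of $G(w)$ its underlying graph is this path, i.e.\ $TG(w)$ is a temporal path graph.

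First I would read off the occurrence positions from the construction. Symbol $1$ sits at positions $1$ and $3$ (from the seed block $w[1,3]=121$). For $2 \le x \le n-1$, the symbol $x$ is placed once by the rule $w[2x+1]=x$ and once as the successor slot $w[2(x-1)] = (x-1)+1 = x$ of the previous iteration, so $x$ occupies positions $2x-2$ and $2x+1$; one checks that the $x=2$ case $\{2,5\}$ is consistent with $w[2]=2$ coming from the seed block. Finally $n$ occupies positions $2n-2$ (from $w[2(n-1)]=n$) and $2n$ (from $w[2n]=n$). A short bookkeeping argument confirms that every position in $[1,2n]$ is accounted for and each symbol appears exactly twice.

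Next I would handle adjacency. For an adjacent pair $(x,x+1)$ the four positions interleave in the order (position of $x$), (position of $x+1$), (position of $x$), (position of $x+1$): for the generic range $2\le x \le n-2$ these are $2x-2 < 2x < 2x+1 < 2x+3$, giving $\pi_{\{x,x+1\}}(w)=(x\,(x+1))^2$, and the two boundary pairs $(1,2)$ and $(n-1,n)$ are checked by hand and likewise alternate. For a non-adjacent pair $(x,y)$ with $y \ge x+2$, I would use the clustering of occurrences: the larger index of $x$ is at most $2x+1$, while the smaller index of $y$ is at least $2y-2 \ge 2x+2$, so every occurrence of $x$ precedes every occurrence of $y$, whence $\pi_{\{x,y\}}(w)=x\,x\,y\,y$, which is not of the alternating form. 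Hence $(v_x,v_y)$ is an edge of $G(w)$ if and only if $|x-y|=1$, so $G(w)$ is the required path and $TG(w)$ is a temporal path graph.

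The routine but error-prone part is the position bookkeeping at the boundaries $x=1$ and $x=n$, where the uniform formula $\{2x-2,\,2x+1\}$ breaks: symbol $n$ ends at $2n$ rather than $2n+1$, and symbol $1$ uses the seed positions $1,3$. These cases must be verified separately, but they present no conceptual difficulty. The only genuine content is the clustering observation that the two occurrences of any symbol lie within distance $3$, which is precisely what forces every non-adjacent pair to fail the alternation test.
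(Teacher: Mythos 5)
Your proposal is correct and follows essentially the same route as the paper's proof: read off the two occurrence positions of each symbol from the construction, show adjacent pairs $(x,x+1)$ have interleaved positions giving $\pi_{\{x,x+1\}}(w)=x\,(x+1)\,x\,(x+1)$, and show non-adjacent pairs have all occurrences of the smaller symbol preceding those of the larger, giving $x\,x\,y\,y$ and hence no edge. If anything, your treatment is slightly more careful than the paper's, since you explicitly handle the boundary symbols $1$ and $n$ (and the pairs $(1,2)$, $(n-1,n)$), which the paper's case split $2 \le x < y-1$ quietly skips.
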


\begin{proof}
    First, we prove that this graph is a path graph. Consider some pair of symbols $x, y \in \Sigma$ where $2 \leq x < y - 1$. As the symbol $x$ only appears in $w$ at positions $2(x - 1)$ and $2x + 1$, and $y$ at positions $2(y - 1)$ and $2y$, then, as $2x + 1 < 2y - 2 $ $\pi_{\{x, y\}}(w) = x x y y$. Hence, $TG(w)$ does not contain, at any timestep, the edge $(v_x, v_y)$. In the other direction, for any pair $x, x + 1$ we have, by construction, that $x$ appears at positions $2x - 2$, and $2 x + 1$, the and $(x + 1)$ at positions $2x$ and $2 x + 3$. Therefore, $\pi_{\{x, y\}}(w) = x (x + 1) x (x + 1)$ and hence $TG(w)$ contains the edge $(v_x, v_{x + 1})$ in at least one timestep. By extension, $TG(w)$ is a temporal graph.
    %
\end{proof}

We now move to determining the structure of the timesteps in these path graphs. The key property of these graphs is that each timestep corresponds to a factor of the word $w$ of length at most $4$. As such, we have a strict bound on the number, and type, of active edges in each timestep. Lemmas \ref{lem:path_timesteps} and Corollary \ref{col:last_timesteps} establishes the structure of the start points in $w$, while Lemma \ref{lem:path_timesteps_general} generalises these results to $w^k$, for any $k \in \mathbb{N}$.

\begin{lemma}
    \label{lem:path_timesteps}
    The construction given above constructs a word $w \in \Sigma^{2n}$ where $TG(w)$ a temporal path graph with $\floor{(2n - 5) / 4}$ timesteps such that the start point of the $i^{th}$ timestep, for any $i \in [2, \floor{(2n - 5) / 4} - 1]$, is $S_i = 4 i - 5$.
\end{lemma}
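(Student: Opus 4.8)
The plan is to read everything off the rigidly periodic structure of the construction, since the claim that $TG(w)$ is a temporal path graph is already supplied by Lemma~\ref{lem:construction_gives_path}; the genuinely new content is locating the start points $S_i$ and then counting the timesteps they induce. The single tool I would set up first is an explicit position-to-symbol dictionary, which is essentially Observation~\ref{obs:positional_value_of_w_k} at $k=1$: for $x \in [1, n-1]$ the even position $2x$ carries $x+1$ and the odd position $2x+1$ carries $x$, with the two boundary exceptions $w[1]=1$ and $w[2n]=n$. In particular every interior symbol $x$ occurs exactly twice, at positions $2x-2$ and $2x+1$, three apart.

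The heart of the argument is an induction establishing $S_i = 4i-5$. For the base case I would note that $w[1,2]=12$ has distinct letters while $w[3]=1$ is the first repeat, so $S_2 = 3 = 4\cdot 2 - 5$. For the inductive step, assuming $S_i = 4i-5$, I would scan the factor starting there and use the dictionary to identify the next symbols: positions $4i-5, 4i-4, 4i-3, 4i-2$ carry $2i-3, 2i-1, 2i-2, 2i$ respectively, which as a set are four consecutive integers and hence pairwise distinct, whereas position $4i-1$ carries $2i-1$, repeating the symbol already seen at $4i-4$. Thus the earliest repeat in the factor beginning at $S_i$ occurs exactly at $4i-1 = 4(i+1)-5$, so $S_{i+1} = 4(i+1)-5$ and the induction closes. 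The only delicate point is certifying that no repeat occurs strictly before position $4i-1$; this is immediate once the four intervening symbols are exhibited as distinct, so it reduces to the dictionary computation above rather than to any global argument about $w$.

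Finally I would count the timesteps by a boundary analysis. The uniform spacing of $4$ persists only while the repeating symbol $2i-1$ still has room to reoccur before the word terminates; as the scan approaches position $2n$ the shortened terminal factors, together with the two boundary symbols $w[1]$ and $w[2n]$, perturb the pattern. My approach is to determine the largest index $i$ for which the triggering position $4i-1$ still lies within $[1,2n]$, verify the final one or two timestep factors directly (as the small cases already illustrate), and read the total off the resulting list of start points. I expect this terminal bookkeeping to be the main obstacle: the interior is completely forced by the period-$4$ behaviour, so all the care goes into reconciling the clean recurrence $S_i = 4i-5$ with the closed-form count of timesteps and with the short factors that appear at the very end of $w$.
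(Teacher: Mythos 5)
Your proposal takes essentially the same route as the paper's own proof: the same position-to-symbol dictionary (Observation~\ref{obs:positional_value_of_w_k} restricted to $k=1$), the same base case $S_2 = 3$, and the same inductive step that exhibits the four symbols following a start point and locates the first repeat four positions later; your explicit check that the four intervening symbols are pairwise distinct (four consecutive integers) is a point the paper leaves implicit, and your induction is correct.

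The one part you leave as a plan rather than execute --- the count of $\floor{(2n-5)/4}$ timesteps --- is precisely where the paper's own proof is weakest: it sandwiches $S_T \leq 2n < S_{T+1}$ while assuming $S_T = 4T-5$ (which conflicts with its own Corollary~\ref{col:last_timesteps} describing an irregular final start point), and the stated inequalities $4T - 5 \leq 2n < 4T - 1$ in fact yield $T = \floor{(2n+5)/4}$, not the claimed $\floor{(2n-5)/4}$. So your instinct that the terminal bookkeeping is the delicate part is well founded; but since the timestep count is part of the lemma statement, a complete proof would require you to actually carry out that boundary analysis (largest $i$ with $4i - 1 \leq 2n$, plus direct inspection of the final short factors) rather than defer it, and doing so carefully would also surface the discrepancy in the paper's stated formula.
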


\begin{proof}
    We prove this in an inductive manner. First, note that as the prefix of $w$ is $1 2 1$, we have that $S_2 = 3$, matching the statement. Assume now that this holds for any $i' \in [2, i - 1]$. Then, following Observation \ref{obs:positional_value_of_w_k},  we have $w[S_{i - 1}] = 2i - 5$, $w[S_{i - 1} + 1] = 2 i - 3$, $w[S_{i - 1} + 2] = 2 i - 4$, $w[S_{i - 1} + 3] = 2i - 2$, and $w[S_{i - 1} + 4] = 2i - 3 = w[S_{i - 1} + 1]$, hence $S_i = S_{i - 1} + 4 = 4i - 5$. To determine the number of timesteps $T$ in $TG(w)$, we have that $S_T \leq 2n < S_{T + 1}$, hence $4T - 5 \leq 2n < 4 T - 1$, thus $T = \floor{(2n - 5) / 4}$.
\end{proof}

\begin{corollary}
    \label{col:last_timesteps}
    The construction given above constructs a word $w \in \Sigma^{2n}$ where $TG(w)$ is a temporal path graph with $\floor{(2n - 5) / 4}$ timesteps such that the start point of the last timestep is $2n - 1$, if $n$ is odd, or $2n - 2$ if $n$ is even.
\end{corollary}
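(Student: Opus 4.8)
The plan is to bootstrap directly off Lemma~\ref{lem:path_timesteps}, which already pins down every interior start point via $S_i = 4i - 5$ and tells us that each corresponding timestep factor $w[S_i, S_{i+1}-1]$ has length exactly $4$. The only thing left to locate is the start point $S_T$ of the final timestep, whose factor $w[S_T, 2n]$ is a strictly shorter suffix of $w$ that the regular period-$4$ pattern cannot complete. So the whole corollary reduces to deciding, from the recursion defining the start points, how many symbols are ``left over'' after the last full length-$4$ factor, and the answer will split into two cases according to the parity of $n$.

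First I would make the termination condition of the recursion precise: a further start point $S_i$ is generated from $S_{i-1}$ precisely when some symbol of $w$ repeats within the suffix beginning at $S_{i-1}$ --- equivalently, using the $\vert w[S_{i-1},S_i]\vert = \vert\letters(w[S_{i-1},S_i])\vert + 1$ formulation, when enough positions remain for the period-$4$ structure of the construction to force a repeat at offset $4$. I would then push the interior relation $S_i = 4i-5$ as far as the word permits, taking the largest index $T$ for which a start point is still generated, and argue that $S_T$ is the start point of the final timestep while $w[S_T, 2n]$ is the leftover suffix.

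The parity split then comes from aligning the grid of start points (which by $S_i = 4i-5$ all lie at positions $\equiv 3 \pmod 4$) against the fixed tail of $w$. Here I would use Observation~\ref{obs:positional_value_of_w_k} together with the explicit final symbols of the construction, namely $w[2n-2] = n$, $w[2n-1] = n-1$, and $w[2n] = n$, to read off exactly where the last repeat is detected. Depending on the residue of $2n$ modulo $4$, the grid of start points lands so that the final start point $S_T$ falls at $2n-1$ in one parity class and one position earlier at $2n-2$ in the other; in each case I would also confirm that no earlier position in the tail already triggers a repeat, so that $S_T$ is genuinely the stated value.

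The main obstacle will be exactly this boundary bookkeeping: the clean formula $S_i = 4i-5$ is only justified in the interior, since Lemma~\ref{lem:path_timesteps} proves it for $i \le T-1$, so at the very end I cannot simply substitute $i = T$. Instead I must examine the final four-or-fewer positions of $w$ directly, handle the two residue classes of $2n$ modulo $4$ separately, and check that the appended final symbol $w[2n] = n$ does not shift which occurrence is first flagged as a repeat. Getting these edge alignments right --- rather than any deep structural argument --- is where the care is needed.
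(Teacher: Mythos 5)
Your plan is essentially identical to the paper's own proof: both anchor at the penultimate start point supplied by Lemma~\ref{lem:path_timesteps}, split on the parity of $n$, and then read the explicit tail of $w$ to locate the final start point. The paper carries this out by taking $S_{T-1}=2n-3$ with tail $(n-2)(n)(n-1)(n)$ for odd $n$, and $S_{T-1}=2n-5$ with tail $(n-3)(n-1)(n-2)(n)(n-1)(n)$ for even $n$, which is exactly the boundary bookkeeping you describe.
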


\begin{proof}
    If $n$ is odd, $S_{\floor{(2n - 5) / 4} - 1} = 2n - 3$ thus we have $w[2n - 3, 2n] = (n - 2) (n) (n - 1) (n)$, giving $S_{\floor{(2n - 5) / 4}} = 2n - 1$.
    If $n$ is even, $S_{\floor{(2n - 5) / 4} - 1} = 2n - 5$ thus we have $w[2n - 5, 2n] = (n - 3) (n - 1) (n - 2) (n) (n - 1) (n)$, giving the statement. 
\end{proof}

\begin{lemma}
    \label{lem:path_timesteps_general}
    The construction given above constructs a word $w \in \Sigma^{2n}$ where, for any $k \in \mathbb{N}$, $TG(w^k)$ a temporal path graph with a lifetime of $T_k = k(\floor{(2n - 5) / 4} - 1) + 1$, and such that the start point of the $T_j + i^{th}$ timestep, for any $i \in [2, T_1 - 1]$, is $S_{T_j + i} = 2j n + 4i - 5$, where $T_j$ is the lifetime of $TG(w^j)$ for some $j \in \mathbb{N}$.
\end{lemma}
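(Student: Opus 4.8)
The plan is to prove the three assertions of the statement---that $TG(w^k)$ is a temporal path graph, that its lifetime is $T_k = k(\floor{(2n-5)/4} - 1) + 1$, and that its interior start points obey $S_{T_j + i} = 2jn + 4i - 5$---in that order, handling the first directly and the latter two by induction on the number $k$ of copies.

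For the path-graph claim, the key structural fact is that the projection operator distributes over concatenation, $\pi_{\mathcal{S}}(uv) = \pi_{\mathcal{S}}(u)\,\pi_{\mathcal{S}}(v)$, since membership in $\mathcal{S}$ is decided symbol-by-symbol and copies occupy contiguous blocks; by induction this gives $\pi_{\{x,y\}}(w^k) = \bigl(\pi_{\{x,y\}}(w)\bigr)^k$ for every pair $x,y$. I would then reuse the two projection computations from the proof of Lemma~\ref{lem:construction_gives_path}: for a consecutive pair $\pi_{\{x,x+1\}}(w) = x(x+1)x(x+1)$, whose $k$-th power $\bigl(x(x+1)x(x+1)\bigr)^k = (x(x+1))^{2k}$ is still alternating, so the edge $(v_x,v_{x+1})$ survives in $TG(w^k)$; whereas for a non-consecutive pair $\pi_{\{x,y\}}(w) = xxyy$ (or $11yy$ when $x=1$), whose $k$-th power begins with $xx$ and so is not alternating, meaning no new edge is created. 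Hence $G(w^k) = G(w)$ is the same path and $TG(w^k)$ is a temporal path graph. This step needs only that each symbol occurs exactly twice in $w$, which is immediate from the construction.

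For the lifetime and start-point claims I would induct on $k$, taking Lemma~\ref{lem:path_timesteps} and Corollary~\ref{col:last_timesteps} as the base case $k=1$. The inductive engine is that the start-point recursion scans left to right for the first repeated symbol inside the current factor, while Observation~\ref{obs:positional_value_of_w_k} lets me read off the symbol at every position of $w^k$ up to a $2jn$ shift. In the interior of the $(j+1)$-st copy the local pattern is identical to that of $w$, so the same argument as in Lemma~\ref{lem:path_timesteps} shows the start points advance in steps of $4$ and are simply translated by the block length $2jn$, giving $S_{T_j + i} = 2jn + 4i - 5$. The recursion $T_{k} = T_{k-1} + (\floor{(2n-5)/4} - 1)$ then unrolls to the claimed closed form.

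The main obstacle is the interface between consecutive copies. Because the final timestep factor of a single copy is short---its start point is $2n-1$ or $2n-2$ by Corollary~\ref{col:last_timesteps}, depending on the parity of $n$---this factor does not terminate exactly at the copy boundary but merges with the leading block $121\cdots$ of the next copy, and I must show this merge both (i) resets the scanning phase so that the next copy reproduces the start-point pattern of $w$ offset by $2jn$, and (ii) costs exactly one timestep relative to a naive doubling, which is precisely the $-1$ in $\floor{(2n-5)/4}-1$ per copy. I would settle this by splitting on the parity of $n$ and computing the handful of positions straddling each boundary explicitly via Observation~\ref{obs:positional_value_of_w_k}, verifying that the first repeat after the boundary falls at the position predicted by the formula.
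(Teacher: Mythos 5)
Your proposal is correct, and for the lifetime and start-point claims it follows essentially the same route as the paper: the paper likewise splits on the parity of $n$, explicitly computes the symbols straddling the copy boundary (obtaining $(n-1)\,n\,1\,2\,1$ for odd $n$ and $n\,1\,2\,1$ for even $n$, so that the first start point after the boundary is $2n+3$), then invokes the stepping argument of Lemma~\ref{lem:path_timesteps} translated by $2jn$ inside each subsequent copy, and finally counts $T_1 - 1$ start points per additional copy to reach the closed form; your recursion $T_k = T_{k-1} + (T_1 - 1)$ is the same count phrased as an induction. The one genuine difference is that you also prove the part of the statement that the paper's proof silently skips: that $TG(w^k)$ is still a temporal \emph{path} graph. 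Your argument for this --- $\pi_{\mathcal{S}}$ distributes over concatenation, hence $\pi_{\{x,y\}}(w^k) = \bigl(\pi_{\{x,y\}}(w)\bigr)^k$, and powering preserves alternation for adjacent pairs ($x(x+1)x(x+1)$ stays alternating) while powering $xxyy$ cannot create an edge --- is correct and is a worthwhile addition, since Lemma~\ref{lem:construction_gives_path} only covers $k=1$ and preservation of the edge set under repetition is not automatic (indeed, repetition can in general both create and destroy alternation). One caution: the paper's numerical constants in this lemma are not internally consistent (its proof opens with $T_1 = \floor{(2n+5)/4}$ while the statement and Lemma~\ref{lem:path_timesteps} use $\floor{(2n-5)/4}$, and small cases do not match either formula), so in carrying out your boundary computation via Observation~\ref{obs:positional_value_of_w_k} you should expect to correct the exact expressions for $T_1$ and for the last start point in Corollary~\ref{col:last_timesteps}; the structure of your argument is unaffected by this.
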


\begin{proof}
    Let $T_1 = \floor{(2n + 5) / 4}$, and $T_k$ be the lifetime of $TG(w^k)$.
    We consider two cases. First, if $n$ is odd, observe that $w^k[S_{T_1}, S_{T_1 + 5}] = (n - 1) n 1 2 1$ and hence $S_{T_2} = 2n + 3$. Otherwise, if $n$ is even, $w^k[S_{T_1}, S_{T_1 + 3}] = n 1 2 1$ and again $S_{T_2} = 2n + 3$. Then, following the same outline as Lemma \ref{lem:path_timesteps}, we have $S_{T_{j} + i} =  2j n + 4i - 5$. 

    To determine the total lifetime of $TG(w^k)$, observe that, for any $j \in [2, k]$, $2(j - 1) n \leq S_{i} \leq 2jn$ for exactly $T_1 - 1$ values of $S_i$. Thus we have $2(k - 1)(T_1 - 1) + T_1 = 2k(T_1 - 1) + 1$ total timesteps in $TG(w^k)$.
\end{proof}

We now show that any exploration of these temporal path graphs requires $\Omega(n^2)$ timesteps. The key insight is that, following Lemmas \ref{lem:path_timesteps}, and \ref{lem:path_timesteps_general}, alongside Corollary \ref{col:last_timesteps}, each timestep corresponds to some factor of $w^k$ containing exactly four unique symbols, and five active edges. Further, there exists at least two edges in any timestep $E_t$ that are not active in the subsequent timestep $E_{t + 1}$. Thus, any agent exploring the path is left waiting for $\Omega(n)$ timesteps after exploring some constant number of timesteps, in this case, at most five. As such, the temporal walk exploring the path can be summarised as moving along at most five vertices, then waiting for $\Omega(n)$ timesteps to move the next five. This gives a lower bound on the length of any temporal walk exploring the full path of $\Omega(n^2)$, matching the upper bound given in Theorem \ref{thm:exploration_upper_bounds}.

\begin{theorem}
	\label{thm:paths_worst_case}
	Given an $n \in \mathbb{N}$, there exists a word-representable temporal path graph of length $n$ represented by a word  $w$ of length $4n^2$ requiring $\Omega(n^2)$ timesteps to explore.
\end{theorem}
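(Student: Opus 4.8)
The plan is to make precise the intuition already sketched in the paragraph preceding the statement: the agent can make rapid progress only across the at-most-five vertices whose symbols appear in a single timestep factor, and between such bursts it must wait $\Omega(n)$ timesteps for the relevant edge to reappear. First I would fix the word $w \in \Sigma^{2n}$ from the Path Graph Construction and take $w^k$ for $k$ chosen so that $|w^k| = 4n^2$, i.e. $k = 2n$, so that by Lemma \ref{lem:path_timesteps_general} the lifetime $T_k = k(\floor{(2n-5)/4} - 1) + 1 = \Theta(n^2)$ is large enough that the word does not run out before exploration could in principle finish. I would invoke Lemma \ref{lem:construction_gives_path} to assert $TG(w^k)$ is a temporal path graph on the vertex labels $1 < 2 < \dots < n$, so that "exploration" means an agent starting at an endpoint must reach the opposite endpoint, traversing all $n$ vertices in order.

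Next I would establish the structural heart of the argument: by Lemma \ref{lem:path_timesteps} and Lemma \ref{lem:path_timesteps_general}, every timestep factor has length exactly $4$ (away from the boundary), so by Observation \ref{obs:positional_value_of_w_k} it contains exactly four consecutive symbols, say $\{x, x+1, x+2, x+3\}$, and hence the active edge set $E_t$ consists only of edges among these four consecutive vertices together with the edges linking them to their immediate neighbours outside the factor (per the edge-activation rule, an edge $(v_a, v_b) \in E_t$ iff $a$ or $b$ appears in the factor). The key quantitative claim I must prove is that consecutive timesteps share very few edges: I would show that when the factor window slides forward by one timestep it advances by $4$ positions in $w$, shifting the block of active symbols forward by $2$, so that a given path edge $(v_x, v_{x+1})$ is active only in a bounded (constant) number of consecutive timesteps and then is absent for roughly $\floor{(2n-5)/4} - c = \Omega(n)$ timesteps until the window wraps around via the periodicity of $w^k$. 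This is exactly the "$\Omega(n)$ wait after every constant number of moves" phenomenon, and I would make it rigorous by locating, for each edge $(v_x, v_{x+1})$, the precise set of timesteps in which it is active using the start-point formula $S_{T_j + i} = 2jn + 4i - 5$.

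With that in hand the lower bound follows by an amortised/potential argument on the agent's position: define the progress of the agent as the index of the rightmost vertex it has reached. In any maximal run of consecutive timesteps the agent can increase its progress by at most a constant $c \le 5$ (since only a constant-length band of edges is simultaneously active and mutually traversable in the increasing direction within one timestep, and the band shifts by only $2$ per step), after which the edge it needs to continue is inactive and it must wait $\Omega(n)$ timesteps for that edge to reappear under the period-$2n$ structure. Since reaching vertex $n$ from vertex $1$ requires increasing progress by $n-1$, the agent must incur $\Omega(n/c) = \Omega(n)$ such waiting phases, each of length $\Omega(n)$, yielding a total of $\Omega(n^2)$ timesteps. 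I would close by checking $4n^2 = |w^k|$ is large enough that all these waiting phases actually occur within the lifetime (this is why $k = 2n$, not a constant, is needed), and conclude the temporal walk has length $\Omega(n^2)$.

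The main obstacle I expect is the second step: cleanly characterising, for a fixed path edge $(v_x, v_{x+1})$, exactly which timesteps activate it and proving the gaps between consecutive active windows are $\Omega(n)$ uniformly in $x$, including correctly handling the wrap-around seams between copies of $w$ (the boundary factors described in Corollary \ref{col:last_timesteps}, where $n$ parity changes the start point to $2n-1$ or $2n-2$) and the short non-length-$4$ factors at the start and end of each copy. Getting the edge-activation rule right — remembering that an edge is active whenever \emph{either} endpoint's symbol lies in the factor, so an edge can be active for a window slightly wider than the band of four symbols — is where the constant $c$ and the exact $\Omega(n)$ gap must be pinned down carefully; everything afterward is a routine counting argument.
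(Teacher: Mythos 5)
Your proposal is correct and follows essentially the same route as the paper's proof: the same periodic-word construction, the same structural observation that each timestep is a length-$4$ window of consecutive symbols drifting forward by two symbols per timestep, and the same burst-then-wait accounting (at most a constant number of vertices gained per burst, followed by an $\Omega(n)$ wait for the window to wrap around), which the paper carries out explicitly in both the decreasing direction (about three vertices per wait) and the increasing direction (about five), exactly as your amortised progress argument does. The only notable difference is cosmetic: you take $k = 2n$ so that $\vert w^k \vert = 4n^2$ matches the theorem statement, whereas the paper's proof uses $w^n$ of length $2n^2$.
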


\begin{proof}
    Let $w$ be constructed as above, and consider $w^{n}$. Note that any agent starting at some vertex $v_{i} \in V \setminus\{v_1, v_n\}$ will take strictly longer to explore the path than some agent starting at $v_1$ or $v_n$, as it is necessary for the agent to reach one of these two vertices, then the other. Therefore, the fastest exploration of $TG(w^k)$ will always involve some agent starting at either $v_1$ or $v_n$.

    Now, consider the path between $v_i$ and $v_{i - 3}$.  The agent may only to move from $v_i$ to $v_{i - 4}$ after the edges $(v_{i}, v_{i - 1}), (v_{i - 1}, v_{i - 2}),$ and $(v_{i - 2}, v_{i - 3})$ have been active in 3 sequential timesteps. Let $t$ be the earliest timestep at which the agent may traverse the edge $(v_{i}, v_{i - 1})$. We assume, without loss of generality, that $w^n[S_t, S_{t + 1} - 1] = (i - 4) (i - 2) (i - 3) (i - 1)$. Therefore, 
    \begin{align*}
        &w^n[S_{t + 1}, S_{t + 2} - 1] = (i - 2) (i) (i - 1) (i + 2)\\
        &w^n[S_{t + 2}, S_{t + 3} - 1] = (i + 1) (i + 3) (i + 2) (i + 4).
    \end{align*}
    From this, we see that the agent may traverse $(v_{i - 1}, v_{i - 2})$ in timestep $t + 1$, however, as $(i - 2) \notin \letters(w^n[S_{t + 2}, S_{t + 3} - 1])$ and $(i - 3) \notin \letters(w^n[S_{t + 2}, S_{t + 3} - 1])$, the edge $(v_{i - 2}, v_{i - 3})$ is not active at timestep $t + 3$.
    Following the arguments of Lemmas \ref{lem:path_timesteps} and \ref{lem:path_timesteps_general}, the timestep next timestep at which $(v_{i - 3}, v_{i - 2})$ is active is no earlier than $t + \floor{2n - 5}/4 - 3$. Therefore, if the agent starts at $v_n$, the earliest timestep at which any exploration of $TG(w^n)$ can be completed is $\sum_{i \in [1, n/3]} \floor{2n - 5}/4 - 3 = n \floor{2n - 5}/12 - n$.
    
    In the other direction, consider the path between $v_{i}$ and $v_{i + 5}$. We assume, without loss of generality, that the first timestep $E_t$ at which the agent can move from $v_i$ to $v_{i + 1}$ corresponds to the factor $w^k[S_t, S_{t + 1} - 1](i - 3) (i - 1) (i - 2) (i)$. Now, consider the factors
    \begin{align*}
        &w^k[S_{t + 1}, S_{t + 2} - 1] = (i - 1)(i + 1)(i) (i + 2),\\
        &w^k[S_{t + 2}, S_{t + 3} - 1] = (i + 1)(i + 3)(i + 2) (i + 4),\\
        &w^k[S_{t + 3}, S_{t + 4} - 1] = (i + 3)(i + 5)(i + 4) (i + 6),\\
        &w^k[S_{t + 4}, S_{t + 5} - 1] = (i + 5)(i + 7)(i + 6) (i + 8),\\
        &w^k[S_{t + 5}, S_{t + 6} - 1] = (i + 7)(i + 9)(i + 8) (i + 10)
    \end{align*}
    Note that the agent may reach vertex $v_{i + 1}$ in timestep $t$, $v_{i + 2}$ in timestep $t + 1$, $v_{i + 3}$ in timestep $t + 2$, $v_{i + 4}$ in timestep $t + 3$, and $v_{i + 5}$ in timestep $t + 4$. However, as $(i + 5) \notin \letters(w^k[S_{t + 5}, S_{t + 6} - 1])$ and $(i + 6) \notin \letters(w^k[S_{t + 5}, S_{t + 6} - 1])$, by the arguments above, the agent then must wait $\floor{2n - 5}/4$ timesteps to reach $v_{i + 5}$.
    Therefore, if the agent starts at $v_1$, the earliest timestep at which any exploration of $TG(w^n)$ can be completed is $\sum_{i \in [1, n/5} \floor{2n - 5}/4 - 5 = n \floor{2n - 5}/20$. 

    Therefore, the number of timesteps needed for an agent to travel from $v_1$ to $v_n$ (resp., $v_n$ to $v_1$) is $\Omega(n^2)$.
\end{proof}

\paragraph*{General Lower Bound.}

We now move to our more general lower bound. We assume, without loss of generality, that we are given a pair $n, d \in \mathbb{N}$ such that $d \in [2, n]$, and $n / d \in \mathbb{N}$. We construct a graph, $G = (V, E)$, containing a set of $d$ ``layers'', $L_1, L_2, \dots, L_d$, such that $\vert L_i \vert = k$, $\forall i \in [1, d]$, with the property that, for any pair $v_i \in L_i$, $v_j \in L_j$, $(v_i, v_j) \in E$ iff $i = j - 1$ or $j = i - 1$. We use, as a basis, a set of path graphs constructed as above.

\noindent
\textbf{Construction},
Let $\Sigma = \{(i, j) \mid i \in [1, n / d], j \in [1, d] \}$. We create a set of $n / d$ word-representable temporal path graphs $P_1, P_2, \dots, P_{n / d}$, represented by $w_1, w_2, \dots, w_{n/d}$ with $\letters(w_i) = \{(i, j) \mid j \in [1, d]\}$, and by extension $P_i$ corresponds to the vertex set $\{v_{i, j} \mid j \in [1, d]\}$. We construct a new word $v = u_1 u_2 \dots u_{d}$ in a manner analogous to the construction of each path graph.
Let $u_1 = (1, 1) (2, 1) \dots (n/d, 1)$, $u_2 = 0 (1, 2) (2,2) \dots (n/d, 2)$, and $u_3 = 0 (n/d, 1) (n/d - 1,1) \dots (1, 1)$. For the remaining factors, $u_i$ is either $w_1[i] w_2[i] \dots w_{n/d}[i]$, if $i$ is even, or $w_{n/d}[i] w_{n/d - 1}[i] \dots w_1[i]$ if $i$ is odd.

We start by showing the key structural results of these graphs. Namely, we show that, given any pair of vertices $v_{i, j}, v_{\ell, k}$, there exists, in any timestep $t$, the edge $(v_{i, j}, v_{\ell, k}) \in E_t$ iff $i = \ell - 1$ or $i = \ell + 1$.

\begin{observation}
    \label{obs:alternating_paths}
    Given some word $w$ representing a path as constructed above and symbol $x \in \letters(w)$, let $\xi_1, \xi_2$ be the indices such that $w[\xi_1] = w[\xi_2] = x$ and $\xi_1 < \xi_2$. Then, $\xi_2 - \xi_1 = 3$.
\end{observation}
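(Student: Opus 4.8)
The plan is to read the two occurrence positions of each symbol directly off the \emph{Path Graph Construction} and subtract. First I would record, for each $x \in \Sigma$, exactly where the letter $x$ is written. The prefix $w[1,3] = 1 2 1$ places the two copies of the letter $1$ at positions $1$ and $3$. For the body, the rule ``for each $x \in \{2,\dots,n-1\}$, $w[2x] = x+1$ and $w[2x+1] = x$'' writes each interior letter twice: the copy $w[2x+1] = x$ is produced when $x$ itself is processed, while the copy $w[2(x-1)] = (x-1)+1 = x$ is produced when $x-1$ is processed. Hence, for $2 \le x \le n-1$, the letter $x$ occurs precisely at positions $2x-2$ and $2x+1$ (for $x=2$ the first of these, position $2$, is instead supplied by the prefix, but it is the same index). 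The final letter $n$ is written by the $x=n-1$ instance of the rule at position $2(n-1) = 2n-2$ and by the closing assignment $w[2n]=n$.

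A short count confirms that every symbol appears exactly twice, so the indices $\xi_1 < \xi_2$ of a given $x$ are well defined and coincide with the two positions found above: the construction fills $3 + 2(n-2) + 1 = 2n$ positions and $\vert w \vert = 2n$. This is the same bookkeeping that underlies Observation~\ref{obs:positional_value_of_w_k}, so I would either cite that observation or re-derive the positions inline. With the positions in hand the conclusion is immediate: for an interior symbol $x$ we obtain $\xi_2 - \xi_1 = (2x+1) - (2x-2) = 3$, exactly the claimed gap.

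The main obstacle here is a caveat rather than a genuine difficulty: the two path endpoints. For $x = 1$ the occurrences are at positions $1$ and $3$, and for $x = n$ at positions $2n-2$ and $2n$, each giving a gap of $2$ rather than $3$. Thus the identity $\xi_2 - \xi_1 = 3$ holds verbatim only for the interior letters $x \in \{2,\dots,n-1\}$. I would therefore either restrict the observation to those symbols (mirroring the range $\ell \in [4,2n-1]$ already used in Observation~\ref{obs:positional_value_of_w_k}) or state explicitly that the two boundary vertices are exceptional. Since the subsequent edge-activity argument only relies on the uniform interior spacing of occurrences, confining the statement to interior symbols is harmless for everything that follows.
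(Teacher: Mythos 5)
Your proof is correct, and it is essentially the argument the paper intends: the paper states this as an observation with no proof at all, and your position bookkeeping (each interior symbol $x$ occurring exactly at positions $2x-2$ and $2x+1$) is the same computation the author records inside the proof of Lemma~\ref{lem:construction_gives_path}. Your ``caveat'' is in fact a genuine error in the paper's statement rather than a weakness of your proof: for $x = 1$ the two occurrences are at positions $1$ and $3$, and for $x = n$ at positions $2n-2$ and $2n$, so $\xi_2 - \xi_1 = 2$ for the two endpoint symbols, contradicting the observation as written. Restricting the claim to $x \in \{2, \dots, n-1\}$ is the right fix, and it is harmless downstream: the only use of the observation, in Lemma~\ref{lem:layer_graph_avoids_fake_edges}, needs the two occurrences of a same-layer symbol to land in blocks of opposite writing direction (one written forward, one reversed), which holds for the boundary layers as well (e.g.\ layer-$1$ symbols occur in $u_1$ and in the reversed block $u_3$), so the nested, non-alternating pattern $\pi_{\{(i_1,i_2),(j_1,j_2)\}}(v) = (i_1,i_2)(j_1,j_2)(j_1,j_2)(i_1,i_2)$, and hence the absence of same-layer edges, is unaffected.
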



\begin{lemma}
    \label{lem:layer_graph_avoids_fake_edges}
    Let $v$ be the word constructed as above, and let $TG(v) = (V, E_1, E_2, \dots,E_T)$. Then, for any pair $v_{i_1, i_2}, v_{j_1, j_2} \in V$ where $i_2 \neq j_2 - 1$ and $j_2 \neq i_2 - 1$, $(v_{i_1, i_2}, v_{j_1, j_2}) \notin E_t$ for any $t \in [1, T]$.
\end{lemma}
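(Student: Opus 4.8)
The plan is to reduce the claim to a statement purely about the word $v$ and then read off the positions of the two symbols. Since, by the definition of $TG(v)$, every edge set $E_t$ is a subset of the edge set of the underlying graph $G(v)$, it suffices to show that $(v_{i_1,i_2},v_{j_1,j_2})$ is \emph{not} an edge of $G(v)$. Writing $x=(i_1,i_2)$ and $y=(j_1,j_2)$, this amounts to showing that $x$ and $y$ do not alternate in $v$, i.e.\ that $\pi_{\{x,y\}}(v)$ contains two consecutive equal symbols and hence lies outside $\{(xy)^k,(xy)^kx,(yx)^k,(yx)^ky\}$.

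Next I would describe the occurrences of $x$ and $y$ in $v$ using the column structure of the construction. Recall that $v=u_1u_2\cdots$ is obtained by interleaving the path words $w_1,\dots,w_{n/d}$ column-by-column, where the factor $u_\ell$ collects the $\ell$-th letters of all path words read in path-index order for even $\ell$ and in reverse path-index order for odd $\ell$ (with $u_1,u_2,u_3$ treated specially). Consequently each symbol $(i,j)$ occurs exactly twice in $v$: once in each of the two factors $u_{p_1},u_{p_2}$ where $p_1<p_2$ are the two positions of vertex $j$ inside a single path word. By Observation~\ref{obs:alternating_paths} these two positions differ by $3$, so $p_1$ and $p_2$ have opposite parity; thus the two factors hosting any fixed symbol are read in opposite path-index directions, and inside each such factor the two symbols of a common layer appear ordered by their path index.

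The argument then splits into the two cases permitted by the hypothesis $|i_2-j_2|\neq 1$. If $|i_2-j_2|\ge 2$, the column indices hosting $x$ and those hosting $y$ are separated: since $p_1,p_2$ are monotone in the layer index, both occurrences of one symbol precede both occurrences of the other, so $\pi_{\{x,y\}}(v)$ has the form $xxyy$ or $yyxx$, which is not alternating. If instead $i_2=j_2$ (necessarily with $i_1\neq j_1$), then $x$ and $y$ occupy the \emph{same} pair of factors $u_{p_1},u_{p_2}$; because these factors are read in opposite directions, the relative order of $x$ and $y$ is reversed between them, so $\pi_{\{x,y\}}(v)$ has the form $xyyx$ or $yxxy$, again non-alternating. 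In both cases $x,y$ fail to alternate, so $(v_{i_1,i_2},v_{j_1,j_2})\notin E$ and therefore $\notin E_t$ for every $t$.

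The main obstacle I anticipate is the same-layer case at the two endpoint layers $j=1$ and $j=d$. For interior vertices Observation~\ref{obs:alternating_paths} immediately yields the opposite parities that drive the reversal, but the first and last vertices of each path sit closer together, and it is precisely to force their two hosting factors to be read in opposite directions that the construction gives the leading factors $u_1,u_2,u_3$ (and, symmetrically, the trailing factors) their special form. Verifying carefully that this boundary handling flips the reading direction for the endpoint layers — so that the reversal argument also goes through there — is the step that requires the most care; the separation argument for $|i_2-j_2|\ge 2$ and the parity argument for interior layers are routine by comparison.
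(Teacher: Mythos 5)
Your proposal follows essentially the same route as the paper's proof: reduce to non-alternation of the two symbols in $v$, then split into the case $\vert i_2 - j_2\vert \geq 2$ (both occurrences of one symbol precede both occurrences of the other, giving the pattern $xxyy$) and the case $i_2 = j_2$ (the two hosting factors are read in opposite path-index directions, giving $xyyx$), exactly as the paper argues via Observation~\ref{obs:alternating_paths}. The boundary concern you flag for the extreme layers is a genuine subtlety, but the paper's own terse proof does not address it either, so your argument matches the paper's in both approach and level of rigour.
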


\begin{proof}
    Assume the contradiction and, without loss of generality, that $i_1 < j_1$ and $i_2 < j_2$. First, let $i_2 = j_2$, and let $\ell_1, \ell_2 \in [1, d]$ be the pair of indices such that $\ell_1 < \ell_2$ and $(i_1, i_2), (j_1, j_2) \in \letters(u[\ell_1]) \cap \letters(u[\ell_2])$. Following Observation \ref{obs:alternating_paths} and the construction, we have that, if $\ell_1$ is even, $\pi_{\{(i_1, i_2), (j_1, j_2)\}} = (i_1, i_2) (j_1, j_2) (j_1, j_2) (i_1, i_2)$, or, if $\ell_1$ is odd, $\pi_{\{(i_1, i_2), (j_1, j_2)\}}(v) = (i_1, i_2) (j_1, j_2) (j_1, j_2) (i_1, i_2)$. Now, consider some case where $i_2 < j_2 - 1$. Then, by construction, $\pi_{\{(i_1, i_2), (j_1, j_2)\}} = (i_1, i_2) (i_1, i_2) (j_1, j_2) (j_1, j_2)$. Thus, in either case, the statement holds.
\end{proof}

\begin{lemma}
    \label{lem:layer_graph_has_real_edges}
    Let $v$ be the word constructed as above and let $TG(v) = (V, E_1, E_2, \dots,E_T)$. Then, for any pair $v_{i_1, i_2}, v_{j_1, j_2} \in V$ where $i_2 = j_2 - 1$ or $j_2 = i_2 - 1$, $\exists t \in [1, T]$ such that $(v_{i_1, i_2}, v_{j_1, j_2}) \in  E_t$.
\end{lemma}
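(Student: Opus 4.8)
The plan is to reduce the claim to showing that the two symbols $x = (i_1,i_2)$ and $y = (j_1,j_2)$ \emph{alternate} in $v$, so that $(v_{i_1,i_2},v_{j_1,j_2})$ is an edge of the underlying graph $G(v)$. Once this is established, the edge must appear in at least one timestep: the symbol $x$ occurs somewhere in $v$, hence in some timestep factor $v[S_t,S_{t+1}-1]$, and by the definition of $E_t$ this places the edge in $E_t$. By the symmetry between the two hypotheses $i_2 = j_2 - 1$ and $j_2 = i_2 - 1$, I would assume without loss of generality that $j_2 = i_2 + 1$.

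The core of the argument is positional. Recall that $v = u_1 u_2 \dots$ is a concatenation of blocks, where block $u_\ell$ collects the $\ell$-th letters $w_1[\ell],\dots,w_{n/d}[\ell]$ of the constituent path words (forward or reverse, with the leading $0$'s in $u_2,u_3$). Consequently the symbol $(a,b)$ occurs in $v$ exactly in those blocks $u_\ell$ for which $\ell$ is a position of $(a,b)$ in its path word $w_a$, and at most once per block. Since all path words share the same positional structure and the blocks are concatenated in increasing order, the relative order in $v$ of the occurrences of any two symbols is governed entirely by the \emph{block indices} in which they sit (ties, i.e. both in one block, broken by within-block order). I would first record, from the path construction together with \autoref{obs:alternating_paths} and \autoref{lem:construction_gives_path}, the two block indices of a layer-$b$ symbol: for an interior layer these are $2b-2$ and $2b+1$, while the endpoint layers $b=1$ and $b=d$ give $\{1,3\}$ and $\{2d-2,2d\}$ respectively.

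With these in hand I would verify that the four block indices of $x$ and $y$ are pairwise distinct and interleave in the pattern $x,y,x,y$: in the generic interior case they are $2i_2-2 < 2i_2 < 2i_2+1 < 2i_2+3$, so reading $v$ from left to right yields $\pi_{\{x,y\}}(v) = (xy)^2$, which alternates. Because the block sets of $x$ and $y$ are disjoint, the two symbols never share a block, so the within-block order—and hence whether $i_1 = j_1$ or $i_1 \neq j_1$, and the forward/reverse orientation of each block—is irrelevant to this computation. The main obstacle, and the only place needing care, is the boundary layers and the special blocks $u_1,u_2,u_3$: I would separately check $i_2 = 1$ (indices $\{1,3\}$ against $\{2,5\}$, or $\{2,4\}$ when $d=2$) and $i_2 = d-1$ (indices $\{2d-4,2d-1\}$ against $\{2d-2,2d\}$), confirming each still interleaves as $x,y,x,y$ and that the inserted $0$'s, occupying distinct positions, do not disturb $\pi_{\{x,y\}}(v)$. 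Since every case yields alternation, $x$ and $y$ alternate in $v$, which completes the proof.
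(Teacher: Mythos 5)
Your proposal is correct and takes essentially the same route as the paper's proof: identify the block indices in which the two symbols occur ($2i_2-2,\,2i_2+1$ for $(i_1,i_2)$ and $2i_2,\,2i_2+3$ for $(j_1,j_2)$), observe that these interleave so that $\pi_{\{(i_1,i_2),(j_1,j_2)\}}(v) = \bigl((i_1,i_2)(j_1,j_2)\bigr)^2$, and conclude the edge appears in some timestep. Your write-up is in fact slightly more careful than the paper's, since you explicitly treat the boundary layers (where the generic position formula $2b-2,\,2b+1$ breaks down) and the inserted $0$'s in $u_2,u_3$, which the paper's proof passes over silently.
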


\begin{proof}
    Assume, without loss of generality, that $i_1 < j_1$ and $i_2 = j_2 - 1$. Then, we have that $(i_1, i_2) \in \letters(u_{2 i_2 - 2})$ and $(i_1, i_2) \in \letters(u_{2 i_2 + 1})$. Further, $(j_1, j_2) \in \letters(u_{2 j_2 - 2})$ and $(j_1, j_2) \in \letters(u_{2 j_2 + 1})$. Therefore, as $j_2 = i_2 + 1$, we have that $(i_1, i_2) \in \letters(u_{2 i_2 - 2})$, $(j_1, j_2) \in \letters(u_{2 i_2})$, $(i_1, i_2) \in \letters(u_{2 i_2 + 1})$, and $(j_1, j_2) \in \letters(u_{2 i_2 + 3})$, hence $\pi_{\{i_1, i_2), (j_1, j_2)\}} = (i_1, i_2) (j_1, j_2) (i_1, i_2) (j_1, j_2)$, giving the result.
\end{proof}

\begin{corollary}
    \label{col:general_good_edge_behavior}
    Let $v$ be the word constructed as above, let $k \in \mathbb{N}$ be some natural integer, and let $TG(v^k) = (V, E_1, E_2, \dots,E_T)$. Then,  for any pair $v_{i_1, i_2}, v_{j_1, j_2} \in V$, $\exists t \in [1, T]$ such that $(v_{i_1, i_2}, v_{j_1, j_2}) \in  E_t$ iff $i_2 = j_2 - 1$ or $j_2 = i_2 - 1$.
\end{corollary}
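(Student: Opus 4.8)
The plan is to reduce the corollary to the two preceding lemmas. By the definition of $TG(\cdot)$, an edge $(v_{i_1,i_2}, v_{j_1,j_2})$ lies in some timestep $E_t$ of $TG(v^k)$ exactly when it is an edge of the static graph $G(v^k)$, i.e.\ exactly when the symbols $x = (i_1,i_2)$ and $y = (j_1,j_2)$ alternate in $v^k$. Lemmas~\ref{lem:layer_graph_avoids_fake_edges} and~\ref{lem:layer_graph_has_real_edges} already prove precisely this alternation characterisation for the single word $v$ (the case $k=1$): $x,y$ alternate in $v$ iff $i_2 = j_2 \pm 1$. So the whole task is to show that passing from $v$ to $v^k$ leaves the set of alternating pairs unchanged. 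The clean tool for this is that projection distributes over concatenation, giving $\pi_{\{x,y\}}(v^k) = (\pi_{\{x,y\}}(v))^k$; it then suffices to check, block by block, that $(\pi_{\{x,y\}}(v))^k$ is alternating if and only if $\pi_{\{x,y\}}(v)$ is.

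For pairs in non-adjacent layers I would show no spurious edge is created. By the proof of Lemma~\ref{lem:layer_graph_avoids_fake_edges}, such a pair has $\pi_{\{x,y\}}(v)$ equal to $xxyy$ or $xyyx$, each containing two consecutive equal symbols. Raising such a string to the $k$-th power keeps that repeated adjacency, so $(\pi_{\{x,y\}}(v))^k$ is still non-alternating and the pair remains a non-edge in $TG(v^k)$.

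For pairs in adjacent layers I would show every genuine edge survives. From the proof of Lemma~\ref{lem:layer_graph_has_real_edges}, when $i_2 = j_2 - 1$ the projection is $\pi_{\{x,y\}}(v) = xyxy = (xy)^2$, a block that \emph{begins with $x$ and ends with $y$}. Hence $\pi_{\{x,y\}}(v^k) = (xy)^{2k}$, which is alternating, so the edge appears in some timestep of $TG(v^k)$. The one point needing genuine care — and what I expect to be the main obstacle — is the seam between consecutive copies of $v$: alternation of $(\pi_{\{x,y\}}(v))^k$ could only fail if the last symbol of one block matched the first symbol of the next, and the whole argument rests on the fact that the edge-realising pattern $(xy)^2$ has distinct first and last letters. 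Combining the two directions, an edge occurs in some timestep of $TG(v^k)$ if and only if the corresponding layers are adjacent, which is the statement.
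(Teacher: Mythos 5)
Your proof is correct and follows exactly the route the paper intends: the paper states this corollary with no proof at all, treating it as immediate from Lemmas~\ref{lem:layer_graph_avoids_fake_edges} and~\ref{lem:layer_graph_has_real_edges}, and your reduction to those two lemmas is that same approach. Your added verification that $\pi_{\{x,y\}}(v^k) = \left(\pi_{\{x,y\}}(v)\right)^k$ and that the block seams preserve (non-)alternation---because the edge pattern $xyxy$ has distinct first and last letters, while the non-edge patterns $xxyy$ and $xyyx$ already contain a repeated adjacency that survives concatenation---is precisely the detail the paper leaves implicit, and it is the only point where the extension from $v$ to $v^k$ could conceivably fail.
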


With Lemmas \ref{lem:layer_graph_avoids_fake_edges} and \ref{lem:layer_graph_has_real_edges}, along with Corollary \ref{col:general_good_edge_behavior}, we see an analogy between the path graphs given previously and these new graphs. More specifically, we have that we may transition between at most five layers in any temporal walk before waiting $\Omega(d)$ timesteps. As there does not exist, at any timestep, an edge between any pair of vertices in the same layer, this bound can more generally be applied, showing that we may visit at most five vertices before waiting $\Omega(d)$ time. This gives our final lower bound of $\Omega(d n)$ for exploring any graph in this class.

\begin{theorem}
    \label{thm:ultimate_lower_bound}
    Given any pair $n, d \in \mathbb{N}$ where $d < n$, there exists some word $v \in \Sigma^*$ representing a temporal graph $TG(v)$ with $n$ vertices, and a diameter of $d$, requiring $\Omega(d n)$ timesteps to explore.
\end{theorem}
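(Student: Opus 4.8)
The plan is to reuse the window-sweeping mechanism behind Theorem~\ref{thm:paths_worst_case}, lifted from single vertices to whole layers, and to combine it with the absence of intra-layer edges so that an $\Omega(d)$-length wait is charged against every constant-sized batch of newly visited vertices. First I would record the two cheap facts: $TG(v^k)$ has exactly $d\cdot(n/d)=n$ vertices, and, by Corollary~\ref{col:general_good_edge_behavior}, an edge is ever present only between vertices of adjacent layers, so the underlying graph is the ``blown-up path'' on $d$ layers and has diameter $\Theta(d)$, matching the claim. As in the path case I would work with $v^k$ for a sufficiently large $k$ (polynomial in $n$) so that the lifetime accommodates the many waiting phases the argument forces, and the phase-counting below will hold for any start vertex.

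The core step is to lift the path estimate to the layer level. Because the factors of $v^k$ list the symbols layer-by-layer, the set of layers appearing in a given timestep factor forms a short contiguous window that sweeps monotonically through $L_1,\dots,L_d$ and then cycles, one sweep spanning $\Theta(d)$ timesteps. The decisive quantitative point, inherited from the computation in Theorem~\ref{thm:paths_worst_case} (note that projecting $v^k$ onto the symbols of any single path $P_i$ returns $w_i^k$), is that the leading edge of this window advances by roughly two layers per timestep, whereas a temporal walk advances by at most one layer per timestep since every edge joins adjacent layers. Consequently the window outruns any agent after a constant number of layer-crossings, after which no edge incident to the agent's current layer is active; the agent is then idle until the window returns on the next sweep, i.e. for $\Omega(d)$ timesteps.

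It remains to turn ``a constant number of layer-crossings per phase'' into ``a constant number of newly visited vertices per phase'', which is where the absence of intra-layer edges enters: by Lemma~\ref{lem:layer_graph_avoids_fake_edges} and Corollary~\ref{col:general_good_edge_behavior} no two vertices of the same layer are ever adjacent, so within a single forward phase the walk can pick up at most one new vertex from each of the constantly many layers it crosses. Hence each phase contributes $O(1)$ new vertices and is followed by an $\Omega(d)$ wait, so exploring all $n$ vertices forces $\Omega(n)$ phases and a total of $\Omega(dn)$ timesteps. The main obstacle is exactly the step that the complete bipartite connections between consecutive layers threaten: since the agent may hop to any vertex of an adjacent layer and may switch freely among the $n/d$ parallel paths, one must verify that this freedom cannot be used to beat the window --- that every such move is governed by the same layer-sweep timing --- so that the ``constant layers per $\Omega(d)$ timesteps'' throttle holds for arbitrary temporal walks rather than only for walks confined to a single path.
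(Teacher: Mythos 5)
Your proposal is correct and follows essentially the same route as the paper: treat the $d$ layers as the vertices of the path gadget, invoke the window-sweeping argument of Theorem~\ref{thm:paths_worst_case} to show the agent makes only $O(1)$ layer-crossings (hence, by the absence of intra-layer edges, visits $O(1)$ new vertices) before waiting $\Omega(d)$ timesteps, and sum over $\Omega(n)$ phases. The one "obstacle" you leave open --- that switching among the $n/d$ parallel paths cannot beat the sweep --- is exactly what the paper dispatches with the observation that edge activity is determined by the layer pair alone (if any edge between two adjacent layers is active in $E_t$, then all such edges are, since each layer's symbols occur as a contiguous block of each timestep factor), so your argument closes in the same way.
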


\begin{proof}
    Let $v$ be constructed as above, let $TG(v^{n}) = (V, E_1, E_2, \dots, E_T) $, and let $L_j = \{v_{i, j} \mid i \in [1, n / d] \}$.
    Observe that if any edge $(v_{i, j}, v_{i + 1, \ell}) \in E_t$ then, $(v_{i, j}, v_{i + 1, \ell'}) \in E_t$ $\forall \ell' \in [1, n / d]$. As, by Lemma \ref{lem:layer_graph_has_real_edges} and Corollary \ref{col:general_good_edge_behavior}, there are no edges between any pair of vertices $v_i, v_{i'} \in L_j$, any temporal walk exploring $TG(v^n)$ involves the agent moving between some pair of layers each time an edge is traversed. Therefore, considering layers in a manner analogous to the vertices in the path graph, following the same arguments as Theorem \ref{thm:paths_worst_case}, the agent can explore at most $5$ vertices in $\floor{2d - 5}/4 - 5$ timesteps, with any temporal path between some pair of vertices in the same layer $L_j$ requiring the agent to move to and from an adjacent layer, either $L_{j - 1}$ or $L_{j + 1}$.
    Hence, exploring all $n$ vertices of the graph requires $\Omega(d n)$ timesteps, completing the proof.
\end{proof}

\section{Conclusion}

In this paper, we have introduced the class of word-representable temporal graphs and provided a set of initial results on the problem of exploring such graphs. Explicitly, we show that any always connected temporal graph can be explored in $O(\delta n)$ timesteps, where $\delta$ is the minimum degree of any vertex in the graph, the any temporal graph represented by a word of length $n(2dn + d)$ can be explored in $O(d n)$ timesteps, where $d$ is the diameter of the graph, and that there exists, for any diameter $d$ and number of vertices $n$, some word representable temporal graph requiring $\Omega(d n)$ timesteps to explore.

There are two immediate directions in which to continue this work. First is to determine if it is possible to determine an optimal exploration schedule for either class of word-representable temporal graphs. Second is to determine some example of an always connected word-representable temporal graph requiring $\Omega(\delta n)$ timesteps to explore.

More generally, there are many questions about what classes of temporal graphs can, be represented by a word. While there exists a natural constraint that, by our definition, the underlying graph must be representable, the restriction on each timestep may lead to interesting results regarding which temporal graphs can be represented at each timestep.

\bibliography{bib.bib}

\end{document}